\newcommand{%
  \immediate\write18{texcount -1 -sum -merge -q .tex output.bbl > -words.sum }%
  \input{-words.sum} words%
}[1]{%
  \immediate\write18{texcount -1 -sum -merge -q #1.tex output.bbl > #1-words.sum }%
  \input{#1-words.sum} words%
}
\newcommand{%
  \immediate\write18{texcount -1 -sum -merge -char -q .tex output.bbl > -chars.sum }%
  \input{-chars.sum} characters (not including spaces)%
}[1]{%
  \immediate\write18{texcount -1 -sum -merge -char -q #1.tex output.bbl > #1-chars.sum }%
  \input{#1-chars.sum} characters (not including spaces)%
}
\titlespacing\section{0pt}{12pt plus 4pt minus 2pt}{2pt plus 2pt minus 2pt}
\titlespacing\subsection{0pt}{12pt plus 4pt minus 2pt}{2pt plus 2pt minus 2pt}
\titlespacing\subsubsection{0pt}{12pt plus 4pt minus 2pt}{2pt plus 2pt minus 2pt}
\newtheorem*{rep@theorem}{\rep@title}
\newcommand{\newreptheorem}[2]{%
\newenvironment{rep#1}[1]{%
 \def\rep@title{#2 \ref{##1}}%
 \begin{rep@theorem}}%
 {\end{rep@theorem}}}
\newtheorem{theorem}{Theorem}
\newtheorem{lemma}{Lemma}
\newcommand{\be}{\begin{equation}}
\newcommand{\ee}{\end{equation}}
\newcommand{\bea}{\begin{eqnarray}}
\newcommand{\eea}{\end{eqnarray}}
\def\squareforqed{\hbox{\rlap{$\sqcap$}$\sqcup$}}
\def\qed{\ifmmode\squareforqed\else{\unskip\nobreak\hfil
\penalty50\hskip1em\null\nobreak\hfil\squareforqed
\parfillskip=0pt\finalhyphendemerits=0\endgraf}\fi}
\def\endenv{\ifmmode\;\else{\unskip\nobreak\hfil
\penalty50\hskip1em\null\nobreak\hfil\;
\parfillskip=0pt\finalhyphendemerits=0\endgraf}\fi}
\newcommand{\tr}{\text{tr}}
\newcommand{\I}{\mathbbm{1}}
\newcommand{\im}{\mathbbm{i}}
\newcommand{\blk}{\color{black}}
\let\@afterindenttrue\@afterindentfalse
\begin{document}


\title{Single system based generation of certified randomness using Leggett-Garg inequality
}

\author{Pingal Pratyush Nath}
\affiliation{Indian Institute of Science,  C. V. Raman Road, Bengaluru, Karnataka 560012, India}

\author{Debashis Saha}
\affiliation{School of Physics, Indian Institute of Science Education and Research Thiruvananthapuram, Thiruvananthapuram, Kerala 695551, India}

\author{Dipankar Home}
\affiliation{Center for Astroparticle Physics and Space Science (CAPSS),Bose Institute, Kolkata 700 091, India.}

\author{Urbasi Sinha }
\email[]{usinha@rri.res.in}
\affiliation{Raman Research Institute, C. V. Raman Avenue, Sadashivanagar, Bengaluru, Karnataka 560080, India}

\date{\today}

\begin{abstract}
    We theoretically formulate and experimentally demonstrate a secure scheme for semi-device-independent quantum random number generation by utilizing Leggett-Garg inequality violations, within a loophole-free photonic architecture. The quantification of the generated randomness is rigorously estimated by  analytical as well as numerical approaches, both of which are in perfect agreement. We securely generate $919,118$ truly unpredictable bits at a rate of $3865$ bits/sec. This opens up an unexplored avenue towards an empirically convenient class of reliable random number generators harnessing the quantumness of single systems.
    
\end{abstract}


\maketitle

\textit{Introduction : }The production and characterization of true random numbers as a resource for various applications is currently a cutting-edge topic attracting considerable studies. In particular, the encryption schemes used in all protocols for secure communication, including
quantum cryptography, rely on genuinely unpredictable random numbers. This is necessary to ensure that an adversary cannot decipher the encrypted message. Furthermore, the desired security must be guaranteed even in the presence of device imperfections or any tampering by an adversary. Strikingly, these key requirements for ensuring reliable private randomness are not currently satisfied by any random number generator (RNG).\cite{herrero2017quantum,ma2016quantum, mannalatha2023comprehensive, markowsky2014sad}

On the other hand, studies over the last decade have opened up an avenue for developing fully secure device-independent RNGs Table I based on using quantum entangled states and certifying genuine randomness by using quantum non-locality evidenced through the statistical violation of Bell inequality \cite{colbeck2009quantum,pironio2010random,colbeck2011private, acin2012randomness,colbeck2012free, pironio2013security, abellan2015generation,acin2016certified,bierhorst2018experimentally, liu2018high,liu2018device, shalm2021device}.          
But an empirical impediment in realizing practically viable such device-independent RNGs is the requirement of adequate spatial separation between two parties while making the Bell inequality testing measurements on their joint state by preserving their entanglement across distance\cite{pironio2018certainty}. To obviate this difficulty, we provide in this paper a proof-of-concept demonstration of how the quantumness of an individual system, as evidenced through the observable violation of the temporal counterpart of Bell inequality\cite{bell1964einstein, clauser1969proposed,brunner2014bell}, viz., the Leggett-Garg inequality(LGI), can be harnessed to certify and quantify genuine randomness.

Ever since LGI  was formulated \cite{leggett1985quantum,emary2013leggett} as a consequence of the assumptions characterizing
the  notion of macrorealism, studies related to LGI have largely focused on using LGI for
testing and probing ramifications of the quantum mechanical (QM) violation of macrorealism \cite{palacios2010experimental,athalye2011investigation,dressel2011experimental,katiyar2013violation,athalye2011investigation,emary2012leggett,williams2008weak,knee2012violation,formaggio2016violation,xu2011experimental,dressel2011experimental,goggin2011violation,suzuki2012violation,wang2018violations, katiyar2017experimental, majidy2021detecting,katiyar2017experimental,ku2020experimental}.
On the other hand, in the present work, we focus on a specific applicational feature of LGI. Apart from being derivable from macrorealism, LGI can also be derived from the conjunction of the assumptions of perfect predictability and No-Signaling-in-Time (NSIT)\cite{mal2016temporal}, the latter condition meaning that measurement does not affect the outcome statistics of any later measurement, analogous to the way the Bell-CHSH inequality was earlier derived from Predictability and No Signaling across \textit{spatial} separation \cite{cavalcanti2012bell}. This feature suggests that if an experiment is set up by choosing the relevant parameters such that the measurement outcomes obtained violate LGI and satisfy the NSIT condition, then these outcomes would be guaranteed to be inherently unpredictable. For quantifying such generated randomness, our treatment will be based on the specifics of the recent experimental test using single photons\cite{joarder2022loophole} that has demonstrated LGI violation by plugging all the relevant loopholes and rigorously satisfying the relevant NSIT conditions. 

The assumptions invoked have been specified with respect to the setup used for the experimental study mentioned earlier, whose  key relevant  features have been discussed in detail in the Appendix of the present paper. Thus, the randomness  certified in this way is  to be regarded as semi-device independent, being  dependent on the extent to which the assumptions invoked have been satisfied.

\textit{The Scheme : }
Consider a single-time evolving system with measurements at various instants of a dichotomic variable $Q$ having eigenvalues $+1$ and $-1$. The Leggett Garg inequality can be written down as, 
\begin{equation}
    \label{Leggett Garg Inequality}
    \braket{Q_1Q_2} + \braket{Q_2Q_3} - \braket{Q_1Q_3} \leq 1
\end{equation}
where $Q_i = Q(t_i)$ is the outcome of the measurement made at time $t_i$ with the flow of time given by, $t_1<t_2<t_3$. The correlation functions are defined as,
\begin{equation}
\label{twotimecorrelators}
    \braket{Q_{i}Q_{j}} = \sum_{a_{i},a_{j} = \pm1}a_{i}a_{j}P(a_i, a_j|Q_i, Q_j)
\end{equation}
where $P(a_i,a_j|Q_i, Q_j)$ is the probability of getting the outcomes $a_i$ and $a_j$ at times $Q_i$ and $Q_j$ respectively. The QM violation of this inequality(with the upper bound of $1.5$) is attributed to the violation of the assumptions characterizing the notion of macrorealism from which LGI is usually derived \cite{leggett1985quantum,emary2013leggett}. However, interestingly, as mentioned earlier, LGI can also be derived from the conjunction of the following assumptions of Predictability and No Signaling in \textit{Time}.
The assumption of Predictability implies that for any given state preparation procedure, all the observable results of measurements at any instant can be uniquely predicted. In this context of a single time-evolving system we are considering, this assumption can be expressed as,
\begin{equation}
\label{predictabilityequation}
    P(a_i, a_j|Q_i, Q_j) \in \{0,1\} .
\end{equation}
The assumption that a measurement cannot affect the observable results of any later measurement is known as the No-Signaling-in-Time condition(also known as the No-Disturbance condition)\cite{kofler2013condition}, which can be expressed as,
\begin{equation}
\label{nsitdefinition}
    P(a_j|Q_j) = \sum_{a_i}P(a_i, a_j|Q_i, Q_j) .
\end{equation}

Relevant to the three-time LGI given by Eq \ref{Leggett Garg Inequality},  the NSIT conditions are as follows
\begin{align}
\label{nosignalingintimeconditions}
    & P(+|Q_2) = P(++|Q_1, Q_2) + P(-+|Q_1,Q_2)\nonumber\\
    & P(+|Q_3) = P(++|Q_1, Q_3) + P(-+|Q_1,Q_3)\nonumber\\
    & P(+|Q_3) = P(++|Q_2, Q_3) + P(-+|Q_2,Q_3) .
\end{align}
%

From this derivation of LGI, it can be argued that in an experimental context where LGI is violated while ensuring the validity of NSIT, the LGI-violating observable outcomes are inherently unpredictable. For obtaining the guaranteed lower bound of the LGI-certified randomness in a semi-device-independent way, we make the following assumptions in the context of our specific experimental setup. First, note that the assumption that the selection of the measurement time is independent of the system’s state, implicit in the derivation of LGI, is satisfied in our setup by ensuring considerable randomness in the choice of the blockers used in the different subsets of runs corresponding to different measurement times. Then the other assumptions invoked in our evaluation of the LGI-certified randomness bound with respect to our setup are listed below: 

\begin{enumerate}
\item The dimension of the system is two. This assumption clearly follows from our setup since the measurements are performed on the spatial degrees of freedom, and there are two paths in the optical setup. Therefore, the state of the photon/system is parametrised using the three parameters $n_x, n_y, n_z$ and can be written down as,
\begin{equation}
    \rho = 1/2(I + \Vec{n}\cdot \Vec{\sigma}) , 
    \ \Vec{n} =(n_x,n_y,n_z) \in \mathbbm{R}^3
\end{equation}
such that $n_x^2+n^2_y+n^2_z \leqslant 1$.

    \item The measurement at times $t_1$ and $t_2$ are the projective measurements defined up-to unitary transformations,
    \begin{equation}
        P_+ = \begin{pmatrix}
    1 & 0 \\
    0 & 0
    \end{pmatrix}, \ 
    P_- =
    \begin{pmatrix}
    0 & 0 \\
    0 & 1
    \end{pmatrix} .
    \end{equation}
    This assumption is sensible here as blockers (pieces of metal) are used for the measurements at $t_1,t_2$. For the measurements at $t_3$ we invoke the general form of $\pm1-$outcome POVM measurement,
    \begin{equation}\label{povmequation}
        M_\pm = \frac12\left((1\pm a)\I \pm \Vec{b}\cdot \Vec{\sigma}  \right), \ \Vec{b} \in \mathbbm{R}^3, \ a \in \mathbbm{R}
    \end{equation} 
    where $|\Vec{b}|\leqslant 1$ and $|\Vec{b}|+|a| \leqslant 1$.
    The measurements at time $t_3$ are carried out by detectors, which are devices with complicated internal workings, unlike the blockers (which are in principle $100 \%$ efficient detectors as has also been characterised in \cite{joarder2022loophole}). Hence, we take the general form of the POVM measurement given by Equation \eqref{povmequation}, which involves an implicit assumption that the blockers do not signal as the POVM at $t_3$ does not depend on the placement of the blockers.
    

\item The initial state is not correlated with any other system thus excluding the possibility of the Eavesdropper having any information about the initial state.

\end{enumerate}

\textit{Bound on Genuine Randomness :} We quantify the randomness generated using the minimum entropy\cite{meng2023maximal, pironio2010random} of the probability distribution, which is defined as,
\begin{align}\label{eq:minimumentropydefinition}
    H_{\infty}(AB|XY) &= -\log\{\text{max}_{a_i, a_j}P(a_i, a_j|Q_i, Q_j)\}\nonumber\\
                      &= -\text{min}_{a_i, a_j}\log\{P(a_i, a_j|Q_i, Q_j)\}.
\end{align}
We now relate the amount of randomness quantified using the minimum entropy to the observed LGI violation. This is done by finding a lower bound on minimum entropy as a function of the LGI violation.
We obtain this bound on minimum entropy by solving the following optimization problem,
\begin{align}\label{subto}
               P^* =     & \ \text{max} \hspace{2 mm}  P(a_i, a_j|Q_i, Q_j)\nonumber \\
                   &\textbf{subject to}\nonumber \\
                   &\braket{Q_1Q_2} + \braket{Q_2Q_3} - \braket{Q_1Q_3} = 1 + \alpha\nonumber\\
                   & P(+|Q_2) = P(++|Q_1, Q_2) + P(-+|Q_1,Q_2)\nonumber\\
                   & P(+|Q_3) = P(++|Q_1, Q_3) + P(-+|Q_1,Q_3)\nonumber\\
                   & P(+|Q_3) = P(++|Q_2, Q_3) + P(-+|Q_2,Q_3)
\end{align}

where $\alpha \in (0,0.5]$. Now the minimal value of the min-entropy, which is compatible with the LGI violation $I$, is given by,
 
\begin{equation}
    H_{\infty}(AB|XY) = -\log_2 P^{*}
\end{equation} \blk
where $P^{*}$ is the solution to the above optimization problem. We derive a bound on minimum entropy as stated in the \textit{Theorem} that follows,
\begin{theorem}\label{thm:rbjp}
Subject to the conditions stated earlier being satisfied, if the three NSIT \eqref{nosignalingintimeconditions} values are zero and the LGI \eqref{Leggett Garg Inequality} value is $1 + \alpha$ where $\alpha \in (0, 0.5]$, then
\begin{equation}
P^* = \frac{1}{4} \left(1 + \alpha + \sqrt{1-2\alpha}\right). \label{eq:pstar}
\end{equation}
Therefore, the guaranteed random bits concerning the amount of violation is given by
\begin{equation}
-\log_2 \left(\frac{1 + \alpha + \sqrt{1-2\alpha}}{4}\right). \label{eq:falpha}
\end{equation}\
\end{theorem}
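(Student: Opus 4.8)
The plan is to set up the optimization in Eq.~\eqref{subto} explicitly in terms of the two-time joint probabilities, exploit the NSIT constraints to collapse many free parameters, and then maximize a single joint probability over the resulting low-dimensional feasible region. Concretely, I would parametrize the three pairs $(Q_1,Q_2)$, $(Q_1,Q_3)$, $(Q_2,Q_3)$ by their four-entry joint distributions $P(a,b|Q_i,Q_j)$, each summing to one. The NSIT conditions \eqref{nosignalingintimeconditions}, together with the trivial normalization, force the single-time marginals $P(+|Q_2)$, $P(+|Q_3)$ to be well-defined independently of which pair they are computed from; writing $q_1=P(+|Q_1)$, $q_2=P(+|Q_2)$, $q_3=P(+|Q_3)$, each two-time distribution is then determined by its marginals plus one correlation parameter. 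Rewriting the correlators $\braket{Q_iQ_j}$ in terms of these, the LGI expression becomes a linear function of the three ``covariance''-type variables, and the constraint that it equal $1+\alpha$ is one linear equation.

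Next I would express the objective. Since $P^*=\max P(a_i,a_j|Q_i,Q_j)$ ranges over all four outcome pairs and all three time pairs, I would argue by symmetry/relabeling that the maximum is attained for a definite choice (most plausibly one of the $(Q_1,Q_2)$ or $(Q_2,Q_3)$ entries, given the asymmetric appearance of $\braket{Q_1Q_3}$ with a minus sign in the LGI combination), and reduce to maximizing, say, $P(++|Q_1,Q_2)=\tfrac14(1+ \text{sign terms})$ subject to the single LGI equality and the positivity constraints $0\le P(a,b|\cdot)\le 1$ on every entry of every table. This is a linear program in a handful of variables once the outcome pair is fixed; the optimum sits at a vertex of the feasible polytope, so I would enumerate the active-constraint configurations. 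The square root in Eq.~\eqref{eq:pstar} signals that the true optimum is \emph{not} at a vertex of a purely linear program, so in fact the positivity constraints must be combined with the requirement that the correlators come from a genuinely \emph{quantum} (or at least a consistent joint) model — i.e. the assumptions (1)--(3) restricting to a qubit with projective measurements at $t_1,t_2$ and a POVM \eqref{povmequation} at $t_3$. Imposing those, the feasible set acquires a quadratic (Bloch-ball / $|\vec b|^2\le\dots$) boundary, and maximizing the linear objective over it via a Lagrange multiplier yields the $\sqrt{1-2\alpha}$.

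So the real computation is: substitute the qubit state $\rho=\tfrac12(I+\vec n\cdot\vec\sigma)$ and the measurement operators from assumptions (1)--(2) into the three joint distributions, impose the three NSIT equalities (which constrain the measurement directions and/or $\vec n$), write $\braket{Q_1Q_2}+\braket{Q_2Q_3}-\braket{Q_1Q_3}=1+\alpha$ as a constraint on the remaining angles, and then maximize the relevant $P(++|\cdot)$ as a function of those angles. I expect this to reduce, after using NSIT, to a one- or two-parameter trigonometric optimization; setting the derivative to zero and simplifying should produce $P^*=\tfrac14(1+\alpha+\sqrt{1-2\alpha})$. The final statement \eqref{eq:falpha} is then immediate from the definition $H_\infty=-\log_2 P^*$.

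The main obstacle, I expect, is the bookkeeping in the middle step: correctly encoding the NSIT conditions as constraints on the Bloch vectors and measurement parameters, and verifying that the claimed optimum is a global maximum rather than a local one — in particular checking the boundary cases $\alpha\to 0^+$ (where $P^*\to \tfrac12$, i.e. zero certified randomness, as it must) and $\alpha=\tfrac12$ (where $P^*=\tfrac38$), and confirming that the POVM generality at $t_3$ (the parameters $a,\vec b$) does not allow a larger $P^*$ than the projective case. Establishing that last point — that relaxing the $t_3$ measurement to a general POVM cannot help the adversary beyond what the projective analysis gives — is the step I would scrutinize most carefully.
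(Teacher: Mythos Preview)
Your overall strategy---parametrize the qubit state, unitaries, and measurements explicitly, impose the three NSIT equalities as constraints on these parameters, reduce the LGI constraint, and then optimize---is exactly what the paper does. The first half of your proposal (treating the joint tables abstractly and attempting a linear program) is a detour you correctly abandon once you notice the square root.

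That said, you underestimate the work in two places. First, the NSIT reduction is not purely algebraic: NSIT$_1=0$ gives one relation ($\chi=0$), but extracting the crucial consequence $n_z=0$ from NSIT$_{2,3}=0$ requires a case analysis in which the alternative branches ($b=0$ or $\cos t\,\sin 2z_1\,\sin 2z_2=0$) are ruled out by showing they force $\text{LGI}\leqslant 1$, contradicting $\alpha>0$. Second, after this reduction the LGI constraint still involves four parameters ($z_1,z_2,t,b$), not one or two, and the twelve joint probabilities are controlled by \emph{three different} expressions in these parameters---$\cos 2z_1$ for the $(Q_1,Q_2)$ pair, $b\cos 2z_2$ for the $(Q_2,Q_3)$ pair, and $-\gamma$ for the $(Q_1,Q_3)$ pair. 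There is no symmetry that lets you fix attention on a single $P(++|\cdot)$; the paper instead proves a standalone lemma that, under the single constraint $x+by-bxy+kb\sqrt{1-x^2}\sqrt{1-y^2}=1+\alpha$, bounds each of $x$, $by$, and $-b-bxy+kb\sqrt{1-x^2}\sqrt{1-y^2}$ separately (three Lagrange-multiplier computations), and then uses these three bounds to cap all twelve probabilities by the same value $\tfrac14(1+\alpha+\sqrt{1-2\alpha})$.

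Your concern about the POVM at $t_3$ is the right thing to worry about, and the resolution is that the relevant bound on $P(\pm\pm|Q_2,Q_3)$ involves $2-b+\alpha+\sqrt{b^2-2\alpha}$, which is monotone increasing in $b$ on $[\sqrt{2\alpha},1]$; hence the projective case $b=1$ is extremal. Finally, do not forget the tightness half: you must exhibit an explicit choice of state, unitaries, and measurement (the paper gives $n_z=0$, $a=0$, $b=1$, $\cos t=1$, $\cos 2z_1=\cos 2z_2=\tfrac12(1-\sqrt{1-2\alpha})$) that achieves the bound, otherwise you have only $P^*\leqslant\tfrac14(1+\alpha+\sqrt{1-2\alpha})$.
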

We briefly outline the proof here, with detailed calculation of the analytical proof of \textit{Theorem 1} and \textit{Theorem 2}, being presented in the Supplementary Material (SM). We use the expressions for the joint probabilities in terms of the parameters defining the unknown state, unitaries, and the measurement at $t_3$ to obtain the expressions for the LGI and NSITs. By suitably utilizing the fact that the NSIT expressions are zero, we establish some relations between the parameters that simplify the LGI expression. 
The problem then simplifies to maximizing the joint probabilities, under the only constraint that the simplified LGI expression is $(1+\alpha)$. We observe that three distinct expressions within the simplified LGI expression are crucial in determining the joint probabilities for the three pairs of measurements. Employing the Lagrange multiplier method, some functional analysis, and intricate mathematical calculations, we identify the maximum values of these three expressions while satisfying the constraint that the simplified LGI value is $(1+\alpha)$.
Consequently, these maximum values help us to compute the upper bounds for all 12 joint probabilities from which we obtain an upper bound on $P^*$. Finally, we present a quantum strategy involving a specific quantum state, unitaries, and measurements that attain this upper bound.

\begin{figure}[]
\centering
\includegraphics[width=0.45\textwidth]{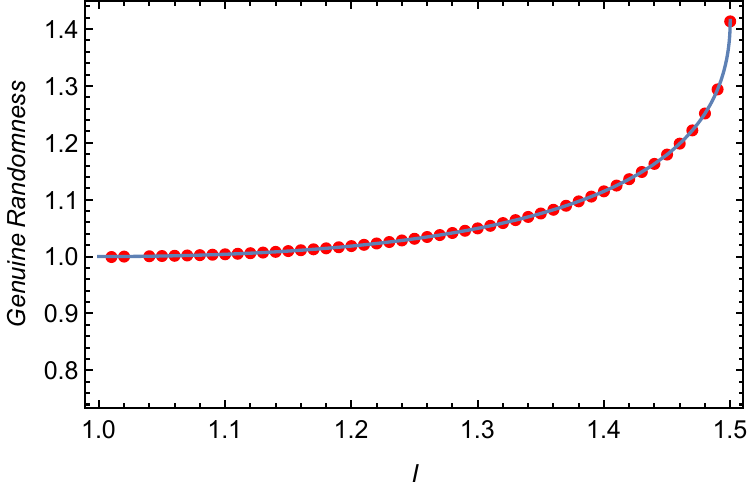}
\caption{Bound on the Genuine Randomness (minimum entropy) for the three-time Leggett Garg setup. This treatment includes the assumption that the system's initial state is not correlated with any other system, thus generating randomness from the joint probabilities $P(a_i,a_j|Q_i, Q_j)$. The blue line is the analytical bound \eqref{eq:pstar} on the minimum entropy, and the red dots are the numerical data from solving the optimization problem. The amount of randomness for the maximal LGI violation is 1.41.} 
\label{fig : BoundonRandomness}
\end{figure}

\textit{Security against state Preparation : }
To ensure security against an adversary, say Eve, accessing initial state information, we adapt our scheme. Firstly, if the user's initial state is entangled with Eve's qubit in a Bell state, Eve can predict the user's measurement outcome by performing her own measurement, compromising security. In this case, the key point is whether we can still ensure an appreciable amount of guaranteed random bits.
Secondly, another possible scenario is when the initial state is a mixture of different pure quantum states fed randomly into each experimental run. Here, the worst-case scenario from a security viewpoint is when Eve can predict the initially prepared state with maximum success.
Even in such a scenario where Eve can maximally guess the outcome of the user's first measurement, we need to ensure that the choice of relevant parameters violates the Leggett-Garg inequality while satisfying all the relevant NSIT conditions, thereby enabling the generation of certified random bits. To achieve the desired security against adversarial attacks, we employ post-processing by quantifying randomness based on user's second measurement outcomes conditioned on first, evaluating guaranteed randomness amount using maximized conditional probability of joint outcome instead of earlier joint probabilities, i.e. evaluating the maximized conditional probability given by $\bar{P^*}$,
 \bea
    \overline{P}^* &=& \max\limits_{\{a_i,a_j,Q_i,Q_j\}}{P(a_j|a_i, Q_i, Q_j)}  \nonumber   \\
  &&  \textbf{subject to constraints in Eq.} \ \eqref{subto} ,
\eea
 
where the mathematical constraints given by Equation \eqref{subto} correspond to violating LGI and satisfying the three relevant NSIT conditions, and the conditional probability is given by,
\begin{equation}\label{condp}
    P(a_j|a_i,Q_i,Q_j) = \frac{P(a_i,a_j|Q_i,Q_j)}{P(a_i|Q_i)} .
\end{equation}
This procedure is based on considering that, for example, in the extreme case of a maximally entangled state shared between Eve and the user, Eve will be able to guess with certainty the outcome of the first $\sigma_z$ measurement by the user using the outcome of her own $\sigma_z$ measurement, which is obviated by the use of conditional probabilities. 
This is possible only when the first measurement is a perfect $\sigma_z$ measurement, which is ensured by the 100 percent efficiency of our blockers.
 The next key question is whether the amount of certified randomness generated by this conditional probability based scheme will be still appreciable, although maybe less than that obtained by the procedure based on joint probabilities discussed earlier. It is this question which is addressed by the following \textit{Theorem 2},
 
 \begin{theorem}\label{thm:rbcp}
Subject to the conditions stated earlier being satisfied, if the three NSIT\eqref{nosignalingintimeconditions} values are zero and the LGI \eqref{Leggett Garg Inequality} value is $1 + \alpha$ where $\alpha \in (0, 0.5]$, then
\begin{equation}\label{cp*}
\overline{P}^* = \frac{1}{2} \left(1 + \alpha + \sqrt{1-2\alpha}\right). 
\end{equation}
Therefore, the amount of guaranteed random bits as a function of $\alpha$ is given by
\begin{equation}\label{lowerboundstatesecure}
f(\alpha) = -\log_2 \left(\frac{1 + \alpha + \sqrt{1-2\alpha}}{2}\right).
\end{equation}
\end{theorem}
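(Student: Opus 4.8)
\textit{Proof proposal.} The plan is to treat Theorem~\ref{thm:rbcp} as the ``conditional'' counterpart of Theorem~\ref{thm:rbjp} and to recycle as much of its machinery as possible. First I would reuse the same parametrisation: write the unknown initial state as $\rho=\tfrac12(\I+\vec n\cdot\vec\sigma)$, the measurements at $t_1,t_2$ as the projectors $P_\pm$ up to unitaries, and the $t_3$ measurement as the POVM $M_\pm$ of Eq.~\eqref{povmequation}; expand the twelve joint probabilities $P(a_i,a_j|Q_i,Q_j)$, the LGI combination, and the three NSIT expressions \eqref{nosignalingintimeconditions} in these parameters; and impose that the NSIT expressions vanish. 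This should yield exactly the relations among the parameters found for Theorem~\ref{thm:rbjp}, reducing the LGI constraint to the same simplified form in which three scalar quantities govern the joint probabilities of the three measurement pairs.

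The new ingredient is the passage to conditional probabilities through Eq.~\eqref{condp}. The structural fact I would exploit is that, in every ordered pair $(Q_i,Q_j)$ appearing in the LGI, the \emph{first} measurement is one of the perfectly efficient projective blocker measurements, so the post-measurement state is pure and $P(a_i|Q_i)=\tfrac12(1+a_i\mu_i)$ for a suitable component $\mu_i$ of the (unitarily rotated) Bloch vector of $\rho$. Hence
\[
P(a_j\,|\,a_i,Q_i,Q_j)=\frac{P(a_i,a_j|Q_i,Q_j)}{\tfrac12(1+a_i\mu_i)}
\]
is just the probability of outcome $a_j$ on a definite pure state; written as a Bloch-sphere overlap it can individually be as large as $1$, so it is the LGI constraint alone that forbids the three pair-conditionals from being simultaneously large. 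The quantity $\overline{P}^*$ is then the maximum of this ratio over all admissible parameters subject to the simplified LGI equalling $1+\alpha$.

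I would then run the constrained optimisation of the ratio with the Lagrange-multiplier and functional-analysis tools already used for Theorem~\ref{thm:rbjp}: for each of the three pairs separately maximise $P(a_i,a_j|Q_i,Q_j)/P(a_i|Q_i)$ subject to the simplified LGI value being $1+\alpha$, then take the largest of the three. I expect this to produce the upper bound $\overline{P}^*\le\tfrac12\big(1+\alpha+\sqrt{1-2\alpha}\big)=2P^*$, with the optimum attained when the conditioning marginal $P(a_i|Q_i)$ equals $\tfrac12$ --- precisely the marginal that arises when the first ($\sigma_z$) measurement acts on a maximally mixed reduced state, e.g.\ when the user's input is half of a Bell pair held by Eve. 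Finally I would close the bound by exhibiting an explicit strategy --- a concrete choice of state with the relevant Bloch component equal to zero, together with unitaries and a $t_3$-POVM adapted from the Theorem~\ref{thm:rbjp} construction --- that saturates $\overline{P}^*=\tfrac12\big(1+\alpha+\sqrt{1-2\alpha}\big)$, giving $f(\alpha)=-\log_2\overline{P}^*$ as stated in \eqref{lowerboundstatesecure}.

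The main obstacle I anticipate is not the substitution algebra but the optimisation of a \emph{ratio} rather than a single probability: one must rule out configurations in which $P(a_i|Q_i)$ is driven small while $P(a_i,a_j|Q_i,Q_j)$ remains comparatively large, since any such configuration would exceed $2P^*$. Establishing the bound therefore requires showing that the combined NSIT-and-LGI constraints enforce a quantitative trade-off between the first-measurement marginal and the joint probability of that same pair; I expect this to demand a case analysis over which of the three pairs realises the maximum, together with a monotonicity argument within each case pinning the optimal conditioning marginal to $\tfrac12$.
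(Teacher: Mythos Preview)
Your plan would work, but it is considerably more laborious than the paper's argument, and the ``main obstacle'' you anticipate does not in fact arise.

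The point you have all the ingredients for but do not assemble is this: the parameter relations that the three NSIT constraints impose --- the very relations you propose to import from the proof of Theorem~\ref{thm:rbjp} --- are precisely $n_z=0$ and $\chi=0$. Computing the first-measurement marginals directly from the joint-probability formulas gives
\[
P(\pm|Q_1)=\tfrac12(1\pm n_z),\qquad P(\pm|Q_2)=\tfrac12\bigl(1\pm n_z\cos[2z_1]\pm\chi\bigr),
\]
so under the NSIT constraints these marginals equal $\tfrac12$ \emph{identically}, for every admissible state, unitary pair and $t_3$-POVM --- not merely at the optimiser. Hence $P(a_j|a_i,Q_i,Q_j)=2\,P(a_i,a_j|Q_i,Q_j)$ holds pointwise on the entire feasible set, and $\overline{P}^*=2P^*$ follows at once from Theorem~\ref{thm:rbjp} with no new optimisation whatsoever. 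That is the paper's proof in its entirety.

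Your Lagrange-multiplier treatment of the ratio would rediscover this, but only after you noticed that the denominator is constant on the constraint surface. The case analysis and monotonicity argument you outline for excluding small-marginal configurations are unnecessary: such configurations are already forbidden by NSIT, which is what makes the passage from joint to conditional probabilities a one-line corollary rather than an independent optimisation.
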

 

The proof is essentially an extension of the proof for Theorem \ref{thm:rbjp}, and the relevant details are given in Section IB of SM.
Comparing Equation \eqref{cp*} and \eqref{eq:pstar} it follows that $\bar{P^*} = 2P^*$ and from Equation \eqref{lowerboundstatesecure} it follows that the randomness with respect to the maximum LGI violation (i.e., $\alpha = 1/2$) is 0.415 as compared to 1.41 in the earlier case. Thus an appreciable amount of certified randomness is ensured to be secure against state preparation.
\blk 

\begin{figure}[]
\centering
\includegraphics[width=0.45\textwidth]{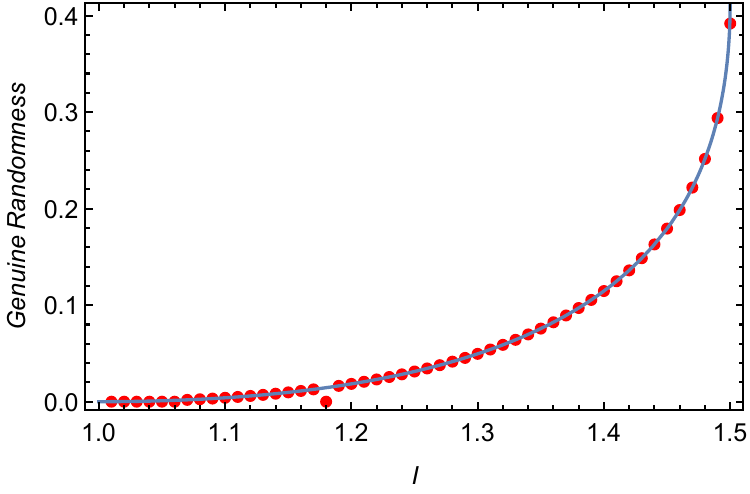}
\caption{Bound on the Genuine Randomness(minimum entropy) for the three-time Leggett Garg setup with full security against state preparation procedure. This is done by solving the optimization problem using the conditional probabilities. The blue line is the analytical bound \eqref{cp*} on the minimum entropy, and the red dots are the numerical data from solving the optimization problem. The amount of randomness for the maximal LGI violation is 0.41, which is, as expected, less than the 1.41 that was earlier obtained assuming secure state preparation. In both these cases, genuine randomness increases monotonically as the LGI violation increases.}
\label{fig:lowerboundstatesecure}
\end{figure}

This bound is sensitive to the NSIT constraint as shown in the SM, where we solve the optimization problem with a small NSIT violation. A higher threshold value of LGI violation is necessary for meaningful randomness generation as NSIT violation becomes more pronounced. Nonetheless, even with a relatively high NSIT violation, a meaningful quantity of random bits can still be obtained as the LGI violation approaches its maximum value.

\textit{Memory Effect and Experimental Results : } To estimate the violation of the LGI, it is necessary to generate data from the device multiple times. However, the device may exhibit variations in performance across different uses, one of the cases being the memory effect, where the output of a particular iteration might depend on the outcome of the previous outputs, hence making it necessary to use a statistical method to account for such memory effects. We have shown in Section II of the SM\cite{supple} how to determine the randomness produced by the devices without making any assumptions about their internal behavior by combining the previously derived bound with a statistical approach.

Due to the memory effect the exact value can be lower than the observed value $\hat{I}$ up to some $\epsilon$, with some small probability $\delta $, 
\begin{equation} \label{8)}
\delta =\exp \left(-\frac{n{\epsilon}^{2} }{2(1/q+I_{q} )^{2} } \right),
\end{equation}
where $I_q$ is the maximum inequality violation allowed by quantum theory, $q = \text{min}\{p(t_1,t_2), p(t_1,t_3), p(t_2,t_3)\}$ and $\epsilon$ is fixed by the maximum LGI violation $I_q$, the probability of the inputs $q$ and the number of runs $n$, as has been defined in Section II of SM.
So the minimum entropy bound of the $n$ bit string generated is,
\begin{equation} \label{memory_effect}
H_\infty(R|S)\ge nf\left(\hat{I}-\epsilon\right)
\end{equation}
with probability at least $1-\delta $. 
With a confidence level of $1-\delta = .99 $ and the experimentally observed LGI violation $I = 1.31$, we have plotted the minimum entropy bound for $n$ runs. In Figure \ref{fig: memoryeffect}, we show that we start getting a substantial amount of randomness only after a certain number of runs due to the presence of the memory effect. Using $n = 10^5$ runs yields a genuine randomness of 3673 bits, corresponding to 0.03673/ bit in the presence of the memory effect. This is lower than expected from the genuine randomness bound derived above, for which we expect a genuine randomness of $0.05406$/bit for an LGI violation of $I = 1.31$.  Moreover, using biased measurement settings increases the threshold for getting an appreciable amount of randomness, as shown in Figure \ref{fig: memoryeffect}.
 \begin{table*}[]
    \label{table : randomnumbers}
    \centering
    \resizebox{\textwidth}{!}{%
    \begin{tabular}{|l|l|l|l|l|}
           \hline
            Performed Experiments & No of Bits & Rate(bits/sec) & Type & Spatial Sep(m) \\
            \hline
             Pironio et al\cite{pironio2010random} & 42 & Not Mentioned & Proof of Concept, Not Loophole free, Uses shielding & 1\\
             \hline 
             P Bierhorst et al\cite{bierhorst2018experimentally}. & 1024 & Not Mentioned & Loophole Free, Randomness Generation & 187 \\
             \hline
             Liu et al \cite{liu2018device} & $6.2469*10^7$ & $181$ &  Randomness Generation & 200 \\
             \hline
             Shen et al \cite{shen2018randomness} & 617,920 & 240 & Randomness Extraction, Assumed No Signaling & Not Mentioned \\
             \hline
             Zhang et al \cite{zhang2020experimental} & 512 & 1.71 & Loophole Free & 194.8 \\
             \hline
             Ming Hang Li et al\cite{li2021experimental} & $5.47*10^8$ & $11598$ & Randomness Expansion & 191\\
             \hline
             Wen Zhao Liu et al\cite{liu2021device} & $2.57*10^7$ & 13,527  & Loophole Free, Randomness Expansion, Uses shielding & Not Mentioned\\
             \hline
             LK Shalm et al\cite{shalm2021device} &  1,181,264,23 & 3606  & Randomness Expansion & 194.8 \\
             \hline 
             \textbf{Our current work } & \textbf{919,118} & \textbf{3865} & \textbf{Loophole free Proof of Concept} & Irrelevant  \\
        
            & & & \textbf{Randomness generation} & \\
             \hline       
    \end{tabular}%
    }
    
       \caption{Comparison of generation rate, type of experiment (proof of concept, loophole-free, and randomness expansion), and the spatial separation of Bell Inequality (BI) based randomness generation experiments with our case of Leggett Garg Inequality (LGI) based randomness generation Experiment. Unlike the BI based experiments, which require spatial separation or some sort of shielding to ensure no signaling, this spatial separation is irrelevant in our case since we can design our experimental setup in a tabletop experiment to ensure NSIT. BI based experiments evolved from proof of concept to loophole-free-experiments, enhancing generation rates and expansion. Our LGI based demonstration, a loophole-free proof of concept experiment, provides the base with an appreciable generation rate. Further improvements and work on expansion schemes for our protocol will boost  LGI based state-of-the-art random number generation.}
    
\end{table*}

\begin{figure}[]
    \centering
    \includegraphics[width=0.45\textwidth]{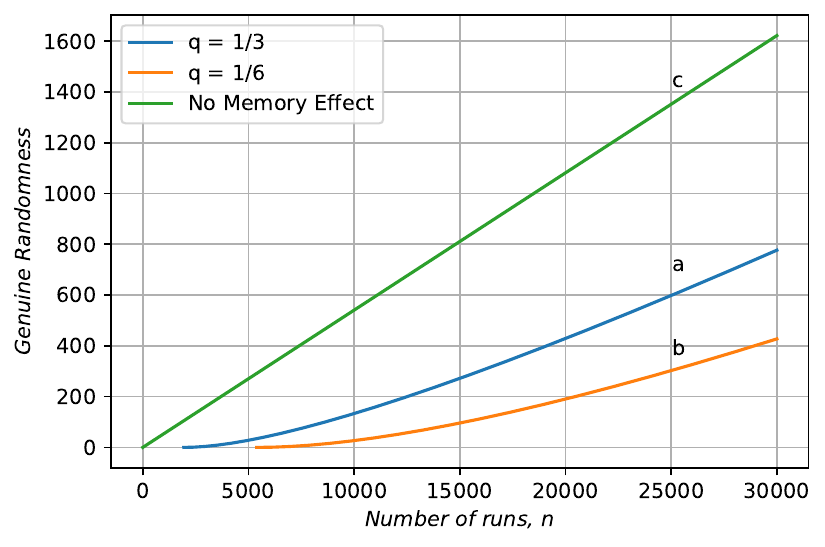}
   
    \caption{In the secure state preparation procedure, we investigate the relationship between genuine randomness and the number of runs in the presence of memory effect. Assuming a violation of the Leggett-Garg inequality with a value of $1.31$ which was observed in our experiment, with a confidence interval of $1-\delta = 0.99$, we see that a notable amount of genuine randomness emerges only after approximately $3000$ runs (curve a) due to the memory effect, compared to the case without memory effect (curve c). For $10^5$ runs, the measured genuine randomness reaches $3673$ with an unbiased seed with probabilities $p(t_1,t_2) = p(t_2,t_3) = p(t_3,t_1) = 1/3$ . Additionally, we investigate the relationship between genuine randomness and the number of runs when using a biased seed, where the measurement settings are chosen with unequal probabilities, $p(t_1,t_2) = 1/6$ and $p(t_2,t_3) = p(t_3,t_1) = 5/12$. While in the unbiased case, non-vanishing randomness starts appearing after $3000$ runs, this threshold increases to around $6000$ runs in the biased case (curve b). For $10^5$ runs in the biased case, the measured genuine randomness reaches $2777$, lower than the unbiased case, as expected.}

    \label{fig: memoryeffect}
\end{figure}

A series of eight experiments were conducted to evaluate various coincidence measurements. Each experiment was repeated multiple times, and the coincidence counts were recorded for $10$ seconds in separate runs. A total of $1,000$ coincidence datasets were collected for each experiment to estimate the LGI violation. The estimated LGI violation from the experiment is $I = 1.32 \pm 0.04$. Considering experimental non-idealities, the corresponding QM prediction is $I_{QM} = 1.34 \pm 0.06$. In addition, another experiment was employed to estimate the single probabilities at times $t_2$ and $t_3$ to verify the NSIT conditions. The experimentally measured values for the three NSIT conditions denoted by $v_1$, $v_2$, and $v_3$ were found to be $0.002 \pm 0.017$, $0.002 \pm 0.016$, and $0.004 \pm 0.016$, respectively.  The QM predictions for these probabilities are $v_1^{QM} = 0$, $v_2^{QM} = 0$, and $v_3^{QM} = 0 \pm 0.0261$. These results certify the randomness of the outputs generated by providing insights into the violations of the LGI and the adherence to the NSIT conditions based on experimental measurements. The average generation rate is  3865 bits/second, and the total number of bits generated is 919,118, as shown in the Appendix. 


    
\textit{Conclusion and Outlook :} Our single system-based RNG scheme's operational advantage over the Bell Inequality based RNGs is that there is no requirement to produce and preserve entanglement across distant systems while measuring randomness-certifying correlations between their observed properties. Fundamentally, there is a key difference in how randomness is certified - entanglement schemes violate Bell inequalities invoking no-signaling across space-like separation, while our scheme certifies randomness through LGI violation invoking no-signaling-in-time, which may not hold in any given experimental configuration. Crucially, our scheme uses setups which satisfy NSIT while violating LGI empirically.

Our treatment provides a fully analytical evaluation of how the lower bound on guaranteed randomness varies monotonically with the LGI violation amount, in complete agreement with correspoding numerical results. While this randomness quantification has operational significance, it can also stimulate a line of studies analogous to the way the nuances of the quantitative relationship between Bell inequality violating randomness and non-locality have been probed in recent years.


Ensuring security against adversary tampering with state preparation is distinct from Bell-based schemes. The most general attack in this scenario is when the user's initial state is  entangled with the adversary's state. To consider the possibility of such an attack, we evaluate guaranteed random bits against the maximized conditional probability of obtaining joint outcomes satisfying no-signaling-in-time conditions and violating LGI. This randomness quantification security strategy is unique to LGI-based schemes and could guide security analysis for other single system quantum randomness generation variants.



 In addition to randomness generation through Bell tests, several interesting semi-device-independent and source-independent schemes have been implemented in diverse experimental setups \cite{avesani2018source,pivoluska2021semi,PRX.10.041048,PRX.6.011020,PRL.114.150501,PRA.94.060301,PhysRevApplied.7.054018}. Additionally, some schemes have been theoretically suggested within sequential measurement setups \cite{Das2022robustcertification,sarkar2023certification}, distinct from our approach. It would be valuable to thoroughly examine and compare the security of these approaches against the potential loopholes. In contrast to the source independent setup we do not make any assumptions about the detectors, which is the main measurement part. Detectors are usually intricate devices with complex internal mechanisms, and thus vulnerable to eavesdropping.

 It is worth mentioning that selecting smaller measurement time intervals without affecting setup-stability can be achieved by automating blocker-position switching using a pseudo-random number generator \cite{shalm2021device}. A thorough examination of randomness expansion in relation to seed randomness could be a potential avenue for future research.

Interestingly,  for counteracting the possible memory effect in the experimental device,  our treatment yields results similar to that for the entanglement-based random generation scheme, requiring a significant number of runs to generate a substantial amount of certified randomness. A more rigorous estimation of the amount of randomness  considering into account the possible side information available to the adversary and the relevant generation rate by employing randomness extraction and amplification will be presented in future work, along with studies investigating the possibility of other variants of this scheme in terms of experimental setups showing the violation of LGI using different systems. 

\textit{Acknowledgements : } U.S. acknowledges partial support provided by the Ministry of Electronics and Information Technology (MeitY), Government of India under a grant for Centre for Excellence in Quantum Technologies with Ref. No. 4(7)/2020-ITEA as well as partial support from the QuEST-DST Project Q-97 of the Government of India.  We also thank Aninda Sinha for useful discussions.
 
\newpage

\appendix
\newpage
\clearpage
\section{Appendix}
\label{experimentandresults}

We provide thorough details of our Experimental Setup for LGI violation, addressing all loopholes and meeting NSIT requirements to ensure suitability for randomness generation. Additionally, we outline the process of generating random bits from this Experimental Setup.

\textit{Experimental Setup} : The experimental setup of Ref \cite{joarder2022loophole} we are considering for generating LGI-certified randomness consists of three stages,
\begin{enumerate}
    \item State Preparation: This step used a single photon source and a beam splitter to generate a pair of photons, out of which one is sent for heralding and the other is sent to the experimental setup. 
    \item Unitary Transformation: The two unitary transformations($t_1 \rightarrow t_2$ and $t_2\rightarrow t_3$) were implemented using an Asymmetric Mach-Zender Interferometer(AMZI) and a displaced Sagnac interferometer(DSI).
    \item Measurements: Measurements were performed using blockers in different arms of the two interferometers for noninvasive measurements (NIM) and single-photon avalanche detectors (SPAD) for direct detection at the end of the experiment.
\end{enumerate}
 
\textit{State Preparation}: A heralded twin-photon source was built based on spontaneous parametric down-conversion (SPDC), with a diodelaser pumping a BBO crystal with a $405$ nm wavelength and $10$ mW power. The BBO crystal is oriented so that it is phase-matched for degenerate, non-collinear, type-I SPDC while being pumped with horizontally polarized light. Parametric down-conversion creates pairs of single photons with vertical polarization and 810 nm central wavelength. To increase pair generation, we also place a focusing lens (L1) to focus the pump beam into the central spot of the BBO crystal. A long-pass filter (F1) is placed after the crystal to block the pump beam and pass only the down-converted single-photon pairs. A half-wave plate and polarising beam splitter PBS1 are placed after the non-linear crystal to separate the two photons in the two arms of the beam splitter. Two mirrors are placed to direct one photon to the experiment and the other to a SPAD1 detector for heralding.

\begin{figure}[]
\centering
\includegraphics[width=0.45\textwidth]{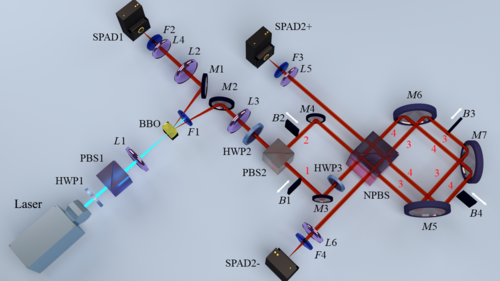}
\caption{Schematic of the experimental setup. Here HWP1, HWP2, and HWP3 are the half-wave plates; PBS1 and PBS2 are the polarizing beam splitters; L1, L4, L5, and L6 are the focusing lens; F1 is the long-pass filter; M is the dielectric mirror; L2 and L3 are the collimating lenses; F2, F3, and F4 are the band-pass filters; B1, B2, B3, and B4 are the blockers; NPBS is the nonpolarizing beam splitter; and SPAD1, SPAD2+, and SPAD2- are the single-photon avalanche detectors. Two arms of the AMZI are marked as 1 and 2, representing the +1 and -1 arms, respectively. Similarly, two arms of the DSI are marked as 3 and 4, representing -1 and +1. SPAD2+ and SPAD2- are placed in the +1 and -1 arms, respectively. Adapted with permission from Joarder et al., 2022, PRX Quantum {\bf 3} 010307, 2022 \cite{joarder2022loophole}.}
\label{fig:experimentdiagram}
\end{figure}

\textit{Unitary Transformation:} The experimental setup consists of two interferometers whose arms are denoted by 1,2,3,4, where blockers are placed for noninvasive measurements. The first interferometer is an asymmetric Mach-Zehnder interferometer (AMZI), while the second is a displaced Sagnac interferometer (DSI). The beam-splitting ratio in the two arms of the AMZI is controlled by a combination of a half-wave plate (HWP2) and a polarizing beam splitter (PBS2).  For satisfying the two-time NSITs,  the two arms of the first Mach-Zehnder interferometer (MZI) are made noninterfering by adding a path difference between the $+1$ and $-1$ arms. A single nonpolarizing beam splitter (NPBS) with a measured splitting ratio of 80:20 (concerning vertically polarized light at 810 nm wavelength) is used in the DSI. Two detectors (SPAD2+ and SPAD2- ) are placed in the two output arms of the DSI to detect single photons.

The time $t_1$, $t_2$ and $t_3$ are being defined in the following manner:
\begin{itemize}
    \item $t_1$ is the time from PBS2 to the first impact on NPBS
    \item $t_2$ is the time from the first impact to the second impact on NPBS
    \item $t_3$ is the time after the impact on NPBS till detection on one of the detectors.
\end{itemize}

\textit{Measurements}: Negative result measurements at $t_1$ and $t_2$ are performed using motorized blockers (B1 and B2) in arms 1 and 2 and (B3 and B4) in arms 3 and 4. The experiment is completed in three stages corresponding to the measurement of $\langle Q_{t_1}Q_{t_3} \rangle$, $\langle Q_{t_2}Q_{t_3} \rangle$ and $\langle Q_{t_1}Q_{t_2} \rangle$ respectively. For the first two stages, two runs each are performed by placing the blockers on the respective arms and detecting the photon at the end to measure the coincidence events $(++)$, $(+-)$, $(-+)$, and $(--)$. For instance, if a blocker is placed in the $-$ arm of the second interferometer(DSI), and a click is observed in SPAD2+, this will count as a measurement for the probability $P(++|Q_2Q_3)$ and a click in SPAD2- will count as a measurement for the probability $P(+-|Q_2Q_3)$. For the third stage, i.e., for the measurement of $\langle Q_{t_1}Q_{t_2} \rangle$, four runs are performed to evaluate the three-time probabilities. For example,  when blockers are placed in the - arm of AMZI and in the - arm of DSI, a detection in SPAD2+ will count as $P(+++|Q_1Q_2Q_3)$, and a detection in SPAD2- will count as $P(++-|Q_1Q_2Q_3)$. These probabilities are then marginalized to evaluate the two-term probabilities at time $t_1$ and $t_2$, which leads to $\langle Q_{t_1}Q_{t_2} \rangle$. $P(+|Q_3)$ was computed by conducting the experiment without any blockers and $P(+|Q_2)$ was computed by placing a blocker at the negative arm of the second interferometer and marginalizing the two time probabilities .
Only the coincidence counts measured, i.e., the simultaneous detection of SPAD1 and SPAD2+ or SPAD2- are considered valid counts in evaluating the probabilities.
We have used avalanced photo diode detectors which have inherently a  reasonably higher dark count.  A follow up experiment could change this to superconducting nanowire based detectors, which have higher quantum efficiency as well as lower dark counts. This in turn will affect the signal to noise ratio of the results and can lead to higher rate of random bit generation.

\textit{Addressing Loopholes:} To ensure the experiment was loophole-free, various measures were taken. The clumsiness loophole was addressed using non-invasive measurements (NIM) and tuning the experimental parameters to satisfy the two-time NSIT conditions. The detection efficiency loophole was eliminated by showing that the violation of LGI cannot be reproduced by the hidden variable model, regardless of detection efficiency. The pivotal aspect of our setup, wherein the measurement at $t_3$ is consistently performed for all the choices of measurement times, plays a crucial role in overcoming this loophole \cite{joarder2022loophole}.  The multi-photon emission loophole was addressed using a heralded single-photon source and appropriate filtering. The coincidence loophole was eliminated by using a pair of photons as a timing reference and adjusting the coincidence time windows accordingly. Finally, the preparation state loophole was closed by post-selecting only those detected photons from the SPDC source and choosing high signal-to-noise ratios for the corresponding coincidence time windows.

 \textit{Random number generation} : From the eight experiments conducted, we selected three datasets from each experiment to generate bit strings composed of `0's and `1's. The generation of random numbers was based on the coincidence clicks of two detectors, SPAD2+ and SPAD2-, with the heralding detector SPAD1. Coincidence counts were identified using information from the heralding detector and employing a $4ns$ time window. We designated detecting a coincidence event at SPAD2+ as `0' and detecting a coincidence event at SPAD2- as `1'.

For the evaluation of the probabilities $P(a_i, a_j|Q_1, Q_3)$ and $P(a_i, a_j|Q_2, Q_3)$ in the first and second phases of the experiment, two sub-runs were conducted for each experiment. In one sub-run, the + arm of the first interferometer was blocked, and in the other sub-run, the - arm of the interferometer was blocked. In the first case, if a photon from the experimental setup coincidentally hit SPAD2+ with the heralding detector SPAD1, it was counted as `0'. If it coincidentally hit SPAD2- with SPAD1, it was counted as `1', thus generating a bit string for this sub-run and resulting in the probabilities $P(-+|Q_1, Q_3)$ and $P(--|Q_1, Q_3)$. Similarly, for the second sub-run where the - arm was blocked, a bit string was generated based on the detector clicks, leading to the probabilities $P(++|Q_1, Q_3)$ and $P(+-|Q_1, Q_3)$.

Likewise, two more bit strings were generated from the second phase of the experiment, providing the probabilities $P(a_i,a_j|Q_2, Q_3)$. However, the third phase of the experiment, aimed at computing correlations at times $t_1$ and $t_2$, involved marginalizing the three-time probabilities $P(a_i, a_j, a_k|Q_1, Q_2, Q_3)$. In this case, blockers were placed simultaneously on both interferometers in different arms, enabling the computation of all the three-term probabilities in 4 runs.

 For example, when both + arms of the interferometers were blocked, the detector counts yielded bit strings corresponding to the three-term probabilities $P(--+|Q_1, Q_2, Q_3)$ and $P(--+|Q_1, Q_2, Q_3)$. Although these bit strings did not directly originate from the two-term probabilities $P(a_i, a_j|Q_1, Q_2)$, which occur in the LGI expression used for certifying randomness, they eventually contributed to the computation of two-term probabilities. They thus could be used to certify and quantify the randomness. 
 

Subject to the conditions assumed in this approach, eight distinct bit strings can be generated, as shown in Table II, using the available data from the experiments focused on coincidence event calculations. The average generation rate is 3865 bits/second, and the total number of bits generated, which is the sum of the 8-bit strings generated, is 919,118. Each bit string had an appropriate length and successfully passed the SP-800-90B entropy test\cite{rukhin2001statistical}\cite{turan2018recommendation} for randomness.

\vspace{2 mm}
\begin{table}[H]
    \label{table : randomnumbers}
    \centering
    \begin{tabular}{|l|l|l|}
           \hline
            Experiment & Rate(bits/sec)) & Length \\
            \hline
             P($-$$-$|23) P($-$$+$|23) & 4722 & 140382 \\
            \hline
             P($+$$-$|23) P($+$$+$|23) & 5139 & 152405\\
            \hline
             P($+$$-$$-$|123) P($+$$-$$+$|123) & 1177 & 34981\\
            \hline
             P($+$$+$$-$|123) P($+$$+$$+$|123) & 4268 & 127123\\
            \hline
             P($-$$-$$-$|123) P($-$$-$$+$|123) & 3953 & 117651\\
            \hline
             P($-$$+$$-$|123) P($-$$+$$+$|123) & 1180 & 34935 \\
            \hline
             P($+$$-$|13) P($+$$+$|13) & 5158 & 153465 \\
            \hline
             P($-$$-$|13) P($-$$+$|13) & 5321 & 158176\\
            \hline
            
    \end{tabular}
    \caption{Length of the random bit string generated from the detector counts of the two detectors SPAD2+ and SPAD2- from the 8 experiments to evaluate the different joint probabilities.}
    
\end{table}
\appendix
\newpage
\clearpage

\onecolumngrid
{\begin{center}\bf \Large{Supplementary material}\end{center} }

\section{Bounds on Genuine Randomness}


A general two-dimensional quantum state can be parameterized as
\be 
\rho = \frac12 (\I + \Vec{n}\cdot \Vec{\sigma}) , \ \Vec{n} =(n_x,n_y,n_z) \in \mathbbm{R}^3
\ee 
such that $n_x^2+n^2_y+n^2_z \leqslant 1$. We take the general form of unitaries $U_1,U_2$ as 
\be 
U_i = 
\begin{pmatrix}
e^{\im x_i} \cos[z_i] & e^{\im y_i} \sin [z_i] \\
-e^{- \im y_i} \sin [z_i] & e^{-\im x_i} \cos [z_i]
\end{pmatrix} \text{ for } i=1,2 , \quad x_i,y_i,z_i \in \mathbbm{R} .
\ee 
Without loss of generality, we take the measurement at $t_1$ and $t_2$ to be diagonal defined by the following projectors,
\be 
P_+ = \begin{pmatrix}
1 & 0 \\
0 & 0
\end{pmatrix}, \ 
P_- =
\begin{pmatrix}
0 & 0 \\
0 & 1
\end{pmatrix} .
\ee 
Moreover, the most general form of the measurement at $t_3$ is defined by two positive operators $M_+,M_- \geqslant 0$ such that $M_++M_-=\I$, which can be expressed as
\be 
M_\pm = \frac12\left((1\pm a)\I \pm \Vec{b}\cdot \Vec{\sigma}  \right), \ \Vec{b} \in \mathbbm{R}^3, \ a \in \mathbbm{R}
\ee 
where $|\Vec{b}|\leqslant 1$ and $|\Vec{b}|+|a| \leqslant 1$.
Without loss of generality, we can consider $VM_\pm V^\dagger$ for any unitary $V$ by absorbing $V$ into $U_2$. Thus, we can take $M_\pm$ to be diagonal as follows
\be 
M_\pm = \frac12\left((1 \pm a)\I \pm b \sigma_z  \right), \ a\in \mathbbm{R} , \ b \in \mathbbm{R}^+ ,
\ee 
and 
\be \label{abc}
|a|+b \leqslant 1, \ b \leqslant 1 .
\ee 

\subsection{Without Security against state Preparation}
Using the state, unitaries, and measurements, we compute the following expression of the joint probabilities,
\bea 
P(++|Q_1,Q_2) = \tr ( U_1P_+ \rho P_+ U_1^\dagger P_+ ) = \frac12 (1 + n_z) \cos[z_1]^2 \label{p12pp}\\
P(+-|Q_1,Q_2) = \tr ( U_1P_+ \rho P_+ U_1^\dagger P_- ) = \frac12 (1 + n_z) \sin[z_1]^2 \label{p12pm} \\
P(-+|Q_1,Q_2) = \tr ( U_1P_- \rho P_- U_1^\dagger P_+ ) = \frac12 (1 - n_z) \sin[z_1]^2 \label{p12mp}\\
P(--|Q_1,Q_2) = \tr ( U_1P_- \rho P_- U_1^\dagger P_- ) = \frac12 (1 - n_z) \cos[z_1]^2 \label{p12mm}
\eea 
It follows from the above four equations that
\be \label{p12c}
\langle Q_1Q_2 \rangle = \cos[2 z_1] .
\ee 
Similarly, the other joint probabilities can be obtained,
\bea
& P(++|Q_1,Q_3) = \tr ( U_2U_1P_+ \rho P_+ U_1^\dagger U_2^\dagger M_+ ) = \frac14 (1 + n_z) (1 + a + \gamma ) \label{p13pp}\\
& P(+-|Q_1,Q_3) = \tr ( U_2U_1P_+ \rho P_+ U_1^\dagger U_2^\dagger M_- ) = \frac14 (1 + n_z) (1 - a - \gamma ) \label{p13pm} \\
& P(-+|Q_1,Q_3) = \tr ( U_2U_1P_- \rho P_- U_1^\dagger U_2^\dagger M_+ ) = \frac14 (1 - n_z) (1 + a - \gamma) \label{p13mp} \\
& P(--|Q_1,Q_3) = \tr ( U_2U_1P_- \rho P_- U_1^\dagger U_2^\dagger M_- ) = \frac18 (1 - n_z) (1 - a + \gamma) \label{p13mm}
\eea 
where 
\be \label{gamma}
\gamma = b (\cos[2 z_1] \cos[2 z_2] -  \cos[t] \sin[2 z_1] \sin[2 z_2] )
\ee 
and
\be \label{t}
t = x_1 + x_2 + y_1 - y_2 .
\ee 
 Therefore,
\be \label{p13c}
\langle Q_1Q_3 \rangle = a n_z + \gamma ;
\ee 
and
\bea 
\label{p23pp}
 P(++|Q_2,Q_3) = \tr ( U_2 P_+ U_1 \rho U_1^\dagger P_+ U_2^\dagger M_+ ) = \frac14 (1 + a + b \cos[2 z_2]) (1 + n_z \cos[2 z_1] + \chi ) \\
 \label{p23pm}
P(+-|Q_2,Q_3) = \tr ( U_2 P_+ U_1 \rho U_1^\dagger P_+ U_2^\dagger M_- ) = \frac14 (1 - a - b \cos[2 z_2]) (1 + n_z \cos[2 z_1] +  \chi ) \\
 \label{p23mp}
P(-+|Q_2,Q_3) = \tr ( U_2 P_- U_1 \rho U_1^\dagger P_- U_2^\dagger M_+ ) = \frac14 (1 + a - b \cos[2 z_2]) (1 - n_z \cos[2 z_1] -  \chi ) \\
\label{p23mm}
    P(--|Q_2,Q_3) = \tr ( U_2 P_- U_1 \rho U_1^\dagger P_- U_2^\dagger M_- ) = \frac14 (1 - a + b \cos[2 z_2]) (1 - 
     n_z \cos[2 z_1] - \chi ) ,
\eea 
and thus,
\be \label{p23c}
\langle Q_2Q_3 \rangle = a n_z \cos[2 z_1] + b \cos[2 z_2] + 
 a \chi .
\ee 
Using \eqref{p12c}, \eqref{p23c}, \eqref{p13c}, we get the expression for the LGI correlator as,
\be \label{lgi}
\text{LGI} = (1+ a n_z)\cos[2 z_1] + b \cos[2 z_2] - a n_z - \gamma .
\ee 
and the three NSIT conditions of Eq 5 in the main text as,
\be \label{nd1}
\text{NSIT}_1 = P(+|Q_2) - P(++|Q_1, Q_2) - P(-+|Q_1,Q_2) = \frac12 \chi ,
\ee 
where 
\be \label{chi}
\chi = (n_x \cos[x_1 - y_1] + n_y \sin[x_1 - y_1]) \sin[2 z_1] ;
\ee 
\be \label{nd2}
\text{NSIT}_2 =
     P(+|Q_3) - P(++|Q_1, Q_3) - P(-+|Q_1,Q_3) =
     \frac12 b (\cos[2 z_2] \chi  + \sin[2 z_2] \xi ) ,
\ee 
where 
\be \label{xi}
\xi = \cos[t] \cos[2 z_1] (n_x \cos[x_1 - y_1] + n_y \sin[x_1 - y_1]) + \sin[t] (n_y \cos[x_1 - y_1] -  n_x \sin[x_1 - y_1]) ;
\ee 
and
\be \label{nd3}
\text{NSIT}_3 = P(+|Q_3) - P(++|Q_2, Q_3) - P(-+|Q_2,Q_3) = \frac12 b  \sin[2z_2] (\xi - n_z \cos[t] \sin[2z_1]) .
\ee  
First, we employ the feature that the three NSITs are zero to simplify the optimization. It is immediate that NSIT$_1=0$ implies $\chi =0$. Substituting $\chi=0$ into \eqref{nd2} and using the fact that NSIT$_2=0$, we find
\be 
b \sin[2z_2]\xi = 0.
\ee 
Substituting this relation into the condition \eqref{nd3}, we obtain
\be \label{nsit30}
b n_z \cos[t] \sin[2z_1]  \sin[2z_2] =0 .
\ee 
If $b=0$, the LGI expression \eqref{lgi} becomes $(1-n_z)\cos[2z_1] + n_z$, which is clearly less than or equal to 1. Thus, we arrive at a contradiction that we do not have any violation. If 
\be 
\cos[t] \sin[2z_1]  \sin[2z_2] = 0 ,
\ee 
then the LGI expression \eqref{lgi} reduces to
\be \label{intlgi}
(1+ a n_z)\cos[2 z_1] + b \cos[2 z_2] - a n_z -  b \cos[2 z_1] \cos[2 z_2] .
\ee 
Due to the following argument, the above quantity \eqref{intlgi} is also less than or equal to 1 for any values of $z_1,z_2,b,a,n_z$.
By equating the partial derivative of \eqref{intlgi} with respect $n_z$ to 0, we get either $a=0$ or $\cos[z_1]=1$ for the maximum value of this expression. The first case simplifies the expression \eqref{intlgi} to $\cos[2 z_1] + b \cos[2 z_2] -  b \cos[2 z_1] \cos[2 z_2]$, which is less than 1 since $b\leqslant 1$. For the second case, the expression becomes 1. So, it leads to a contradiction with LGI violation, and thus, \eqref{nsit30} must imply $n_z = 0$. Altogether, the fact that the three NSITs are zero implies
\be \label{nsitsim}
\chi = n_z = 0 .
\ee 
By replacing $n_z=0$ into the LGI expression \eqref{lgi} and taking the LGI value to be $(1+\alpha)$, one arrives at the following relation 
\be \label{lgiob}
\text{LGI} = \cos[2 z_1] + b \cos[2 z_2] - b \cos[2 z_1] \cos[2 z_2] + b \cos[t] \sin[2 z_1] \sin[2 z_2]  = 1+\alpha ,
\ee 
wherein $b\neq 0$ and  $ \alpha \in (0,0.5] $.
The next step is to obtain $P^*$ from the above relation. To do so, we take the help of the following lemma.

\begin{lemma}
Suppose we have two variables $x,y$ such that $x,y\in (-1,+1)$ satisfying the constraint 
\be \label{op}
x+by-bxy + kb \sqrt{1-x^2}\sqrt{1-y^2} = 1+\alpha 
\ee 
where $\alpha \in (0,0.5], b\in (0,1],$ and $k\in [-1,1]$. Then the maximum value of $x$ is $\alpha+\sqrt{1-2\alpha}$; the maximum value of $by$ is $\alpha+\sqrt{b^2-2\alpha}$; and the maximum value of the expression $(-b-bxy+kb\sqrt{1-x^2}\sqrt{1-y^2})$ is $\alpha+\sqrt{1-2\alpha} -1$. 
\end{lemma}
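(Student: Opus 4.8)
The plan is to treat \eqref{op} as an implicit constraint defining a curve in the $(x,y)$-square and to extremize each of the three target quantities along it by Lagrange multipliers, then verify that the claimed extrema are actually attained (and are maxima, not saddle points or boundary artifacts). Write the constraint as $g(x,y) = x + by - bxy + kb\sqrt{1-x^2}\sqrt{1-y^2} - (1+\alpha) = 0$. For the first claim, maximize $f(x,y)=x$ subject to $g=0$; the stationarity condition $\nabla f = \lambda \nabla g$ gives one nontrivial equation (the $y$-component of $\nabla g$ must vanish, since $\partial f/\partial y = 0$ and generically $\lambda\neq0$), namely $b - bx - kbx\,\sqrt{1-y^2}/\sqrt{1-x^2}\cdot(\ldots)$ — more precisely $\partial g/\partial y = b - bx - kb\sqrt{1-x^2}\,y/\sqrt{1-y^2} = 0$. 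Solving this for the relationship between $x$ and $y$ at the optimum and substituting back into $g=0$ should collapse \eqref{op} to a quadratic in $x$ whose relevant root is $x = \alpha + \sqrt{1-2\alpha}$. I expect the $k$-dependence to drop out at the optimum — intuitively, at the $x$-maximum the $k$-term is being used maximally, and one can check that $k=1$ is forced (or that the stationary value is independent of $k$ once $|k|\le 1$); this is worth stating carefully.

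For the second claim, the symmetry of the problem is \emph{not} quite $x\leftrightarrow y$, because the constraint is not symmetric ($x$ appears linearly with coefficient $1$, while $y$ appears with coefficient $b$ and also in the $-bxy$ term). So I would instead substitute $u = by$, rewrite \eqref{op} as $x + u - xu + k b \sqrt{1-x^2}\sqrt{1 - u^2/b^2} = 1+\alpha$, and maximize $u$ by the same Lagrange computation; the analogous quadratic should yield $u_{\max} = \alpha + \sqrt{b^2 - 2\alpha}$. (Here one needs $b^2 \geq 2\alpha$ for the square root to be real at the optimum — presumably guaranteed by the regime in which the lemma is applied, or else the max of $u$ is attained at a boundary; I would note this hypothesis.) For the third claim, observe that the quantity $(-b - bxy + kb\sqrt{1-x^2}\sqrt{1-y^2})$ is exactly $g(x,y) + (1+\alpha) - x - by - b = (1+\alpha) - b - x - by + \text{(nothing)}$... wait — more usefully, on the constraint set $x + by + (-b - bxy + kb\sqrt{1-x^2}\sqrt{1-y^2}) = 1+\alpha$, i.e. the three pieces $x$, $by$, and the target expression sum to the constant $1+\alpha$ plus the correction $b(y - by)$? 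I would recompute: since $x + by - bxy + kb\sqrt{1-x^2}\sqrt{1-y^2} = 1+\alpha$, the target $E := -b - bxy + kb\sqrt{1-x^2}\sqrt{1-y^2}$ satisfies $E = 1+\alpha - x - by - b + \text{(the }{-}bxy\text{ and radical already in }E)$ — so in fact $E = (1+\alpha) - x - by - b$ directly from the constraint. Hence maximizing $E$ is the same as \emph{minimizing} $x + by$ on the constraint set, and $E_{\max} = 1 + \alpha - b - \min(x+by)$.

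So the real content is: (i) the max of $x$, (ii) the max of $by$, and (iii) the min of $x + by$ — all along \eqref{op}. The third is where I expect the main obstacle: $\min(x+by)$ is a genuinely new optimization, and one must show it equals $2 - \sqrt{1-2\alpha} - b$... let me sanity-check against the claimed answer $E_{\max} = \alpha + \sqrt{1-2\alpha} - 1$: this forces $\min(x+by) = (1+\alpha) - b - (\alpha + \sqrt{1-2\alpha} - 1) = 2 - b - \sqrt{1-2\alpha}$. I would verify this by the same Lagrange-multiplier machinery applied to $x + by$ (now $\nabla f = (1,b)$ is not aligned with a coordinate axis, so both components of the stationarity equation are active), reducing again to a quadratic; the boundary cases $x=\pm1$ or $y=\pm1$ must be checked separately since the radicals vanish there and the Lagrange conditions degenerate. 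Throughout, the recurring technical nuisance is the square-root terms: their derivatives blow up at the boundary, so I would either substitute $x=\cos\theta$, $y=\cos\phi$ to make everything smooth and periodic, or carefully handle the boundary of the open square $(-1,1)^2$ as a limiting case. The trigonometric substitution is probably cleanest and also connects directly to the form \eqref{lgiob} that this lemma is meant to be applied to, with $x=\cos[2z_1]$, $y=\cos[2z_2]$, $k=\cos t$.
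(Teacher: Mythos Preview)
For the first two claims your Lagrange plan matches the paper's: set the appropriate partial of $g$ to zero, solve for the other variable, substitute back, and obtain a quadratic. One correction to your expectation: the $k$-dependence does \emph{not} simply drop out at the stationary point. The paper finds the stationary $x$ (respectively $by$) explicitly as a function of $k^2$ and $b$, then verifies by differentiation that it is monotone in both, so the maximum is attained at $k^2=b=1$ (respectively $k^2=1$); only then does one read off $\alpha+\sqrt{1-2\alpha}$ (respectively $\alpha+\sqrt{b^2-2\alpha}$). So you should treat $k$, and where relevant $b$, as additional optimization variables rather than fixed parameters whose influence disappears. Your remark that the second claim needs $b^2\geqslant 2\alpha$ is correct and is used in the paper.

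For the third claim your identity $E = (1+\alpha) - b - (x+by)$ on the constraint set is correct and is a genuinely different route from the paper's. The paper does not make this observation; instead it attacks $E$ directly by Lagrange multipliers, exploits the $x\leftrightarrow y$ symmetry of $E$ to write down a second stationarity relation, and from the resulting pair of equations deduces that $b=1$, $x=y$, and $k=1$ at the optimum, then solves a quadratic in $x$. Your reduction converts the problem into minimizing $b+x+by$ along the constraint, which is conceptually clean and makes the linkage among the three target quantities transparent, but the remaining minimization is not materially easier than the paper's direct computation: you still face a two-variable Lagrange problem plus an optimization over $b$, and the target value $\min(b+x+by)=2-\sqrt{1-2\alpha}$ does not fall out without work. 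Either route reaches the same endpoint; the paper's buys the $x=y$ ansatz for free from symmetry, while yours buys a one-line explanation of why the three bounds are tied together.
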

\begin{proof}
We redefine the constraint \eqref{op} as 
\be \label{gxy}
G(x,y)= x+by-bxy + kb \sqrt{1-x^2}\sqrt{1-y^2} - 1 -\alpha = 0 .
\ee 
For obtaining the maximum and minimum value of $x$, we take the assistance of the Lagrange multiplier method
\be 
\nabla_y x = \lambda \nabla_y G(x,y) ,
\ee 
which, after some steps, leads to the following relation
\be 
y = \sqrt{\frac{1-x}{k^2(1+x)+(1-x)} } .
\ee 
Replacing this expression of $y$ into \eqref{op} and after some simplifications, we arrive at a quadratic equation of $x$,
\be 
(k^2b^2 - b^2 +1 )x^2 -2(1+\alpha-b^2) x + (1+\alpha)^2 - b^2(1+k^2) = 0,
\ee
the solution of which is given by
\be 
x  = \frac{1}{1+b^2k^2 - b^2} \left(1+\alpha -b^2 \pm b \sqrt{\alpha^2 - (2\alpha+\alpha^2)k^2 + b^2 k^4} \right) .
\ee 
It can be verified that within the range of values of $k^2 \in [0,1], b\in (0,1]$ such that $x \in (-1,+1)$, the above larger solution (with the + sign) of $x$ is increasing with $k^2$ and $b$ since the derivatives are positive in that range. Thus, the maximum value is obtained for $k^2=b=1$; consequently, the maximum value of $x$ is $\alpha + \sqrt{1-2\alpha}$.

We follow a similar method to obtain the maximum value of $yb$. With the aid of $\nabla_x by = \lambda \nabla_x G(x,y)$, we first get 
\be 
 x = \frac{1-by}{\sqrt{k^2b^2(1-y^2) + (1-by)^2} } .
\ee 
Replacing this expression of $x$ in \eqref{op} leads us to the following quadratic equation of $y$ after some simplifications
\be 
k^2(by)^2 - 2\alpha (by) + \alpha^2 + 2\alpha - k^2b^2 = 0 .
\ee 
Taking $by$ as the variable, the solution of the above is
\be 
by = \frac{1}{k^2} \left( \alpha \pm \sqrt{\alpha^2 - (\alpha^2 + 2\alpha) k^2 + b^2k^4} \right) ,
\ee 
which is again maximum for $k^2=1$ with + sign whenever $by \in (-1,+1)$. Thus, the maximum value of $by$ is $\alpha + \sqrt{b^2-2\alpha}$.

To find the maximum value of $(-b-bxy+kb\sqrt{1-x^2}\sqrt{1-y^2})$, 
we first get the following relation by equating $\lambda$ from the two Lagrange equations $\nabla_x (-b-bxy+kb\sqrt{1-x^2}\sqrt{1-y^2})= \lambda \nabla_x G(x,y) $ and $\nabla_y (-b-bxy+kb\sqrt{1-x^2}\sqrt{1-y^2}) = \lambda \nabla_y G(x,y)$, 
\be \label{bxyeq1}
k( y - x^2y -bx + bxy^2 ) = (by - x) \sqrt{1-x^2}\sqrt{1-y^2} . 
\ee 
Let us note that the expression $(-b-bxy+kb\sqrt{1-x^2}\sqrt{1-y^2})$ remains invariant if we interchange the variables $x$ and $y$. Thus, if the maximum value of this expression is obtained for some $x=x^*,y=x^*$, then $x=y^*,y=x^*$ also yields its maximum value. Therefore, the following equation with the interchange between $x$ and $y$ in \eqref{bxyeq1} should also hold,
\be \label{bxyeq2}
k( x - y^2x -by + byx^2 ) = (bx - y) \sqrt{1-x^2}\sqrt{1-y^2} .
\ee 
From \eqref{bxyeq1} and \eqref{bxyeq2}, we get another relation,
\be 
(x-y^2x-by+byx^2)(by-x) = (y-x^2y-bx+bxy^2)(bx-y) ,
\ee 
after using the facts that $x\neq \pm1$, $y\neq \pm1$, and $k\neq 0$ since  $\alpha >0 $. A straightforward calculation shows that the above equation implies, either $b=0$ or $x=\pm y$. We know that $b\neq 0$, otherwise the right-hand-side of \eqref{op} cannot be greater than 1. If we replace $x=-y$ in \eqref{bxyeq1}, we will get $k=1$. Subsequently, by substituting $x=-y,k=1$ into \eqref{op}, one finds $x = 1+\alpha/(1-b)$ which is always greater than 1. So, this cannot be a correct solution since $x\in (-1,1)$. By replacing the only remaining option, $x=y$, into \eqref{bxyeq1}, we get either $k=-1$ or $x=0,\pm 1$ or $b=1$. Clearly, $x$ cannot be $0$ or $\pm 1$. If $k=-1$ and $x=y$, then \eqref{op} suggests that the value of $x=1+\alpha/(1-b)$, which is also greater than 1. Hence, we discard this option. As a consequence, we must have $b=1$. Finally, substituting $x=y$ and $b=1$ into \eqref{op}, we arrive at  
\be \label{simopx=y}
(1+k)x^2 - 2x + 1+\alpha - k = 0 .
\ee 
The solution of this quadratic equation of $x$ is given by
\be 
x = \frac{1}{1+k} \left( 1 \pm \sqrt{1 - (1+k)(1+\alpha-k)} \right) .
\ee 
The minimum value of the above expression is 
\be \label{minxbxy}
\frac12 \left( 1- \sqrt{1-2\alpha} \right) ,
\ee 
when $k=1$ and the sign is negative.   
On the other hand, for $x=y$ and $b=1$, the expression,
\bea  
-b-bxy+kb\sqrt{1-x^2}\sqrt{1-y^2} &=& -(1+k)x^2 + k - 1 \nonumber \\
& = & \alpha  - 2x \nonumber \\
& \leqslant & \alpha + \sqrt{1-2\alpha} -1 ,
\eea  
where the second line is obtained using \eqref{simopx=y}, and the third is obtained by restoring the minimum value of $x$ from \eqref{minxbxy}.
\end{proof}  

We can identify the variables $x,y,k$ from \eqref{op} by $\cos[2z_1],\cos[2z_2],\cos[t]$ in \eqref{lgiob}, respectively. By using the above lemma, the reduced form of LGI value \eqref{lgiob} implies
\be \label{cosz1}
\cos[2z_1] \leqslant \alpha + \sqrt{1-2\alpha},
 \ee 
 \be \label{cosz2}
 b \cos[2z_2] \leqslant \alpha + \sqrt{b^2-2\alpha} ,
 \ee 
 and 
 \be \label{cosz1z2}
- b- b\cos[2 z_1] \cos[2 z_2] + b \cos[t] \sin[2 z_1] \sin[2 z_2] \leqslant  \alpha + \sqrt{1-2\alpha} -1 .
 \ee 
Note that, due to \eqref{cosz2}, $b\geqslant \sqrt{2\alpha}$. Moreover, for the maximum violation, i.e., $\alpha=1/2$, $b=1$ signifying the measurement at $t_3$ to be projective.
Let us now evaluate the values of the joint probabilities. Putting $n_z=0$ in \eqref{p12pp} and \eqref{p12mm}, and applying \eqref{cosz1} we get
 \bea \label{p12b}
 P(++|Q_1,Q_2) = P(--|Q_1,Q_2)
 & = & \frac14 (\cos[2z_1]+1) \nonumber \\
 &\leqslant &  \frac{1}{4}\left( 1+ \alpha + \sqrt{1-2\alpha} \right) .
 \eea 
The other probabilities $P(+-|Q_1,Q_2),P(-+|Q_1,Q_2)$ must be less than this value as the sum of all four is 1. Substituting \eqref{nsitsim} into the joint probabilities pertaining to $Q_2,Q_3$, we find
 \bea 
  P(\pm \pm |Q_2,Q_3)  
& = & \frac14 \left(1 \pm a + b \cos[2 z_2] \right) \nonumber \\
 & \leqslant &  \frac14 \left(2 - b + b\cos[2 z_2] \right) \nonumber  \\
 & \leqslant &  \frac14 \left(2 - b + \alpha + \sqrt{b^2-2\alpha}  \right) .
 \eea 
The second line is obtained by using $|a|\leqslant 1-b$ from \eqref{abc}. The third line is due to \eqref{cosz2}. The derivative of the expression $\left(2 - b + \alpha + \sqrt{b^2-2\alpha}  \right)$ is
$b/(\sqrt{b^2-2\alpha}) - 1 ,$
which is always positive within the interval $\alpha \in (0,0.5]$ and $b\in [\sqrt{2\alpha},1]$. Therefore, it is increasing with $b$ within the interval $[\sqrt{2\alpha},1]$, and therefore, the maximum is achieved at $b=1$. Subsequently, we have
\be  \label{p23b}
  P(\pm \pm|Q_2,Q_3) \leqslant \frac{1}{4}\left( 1+ \alpha + \sqrt{1-2\alpha} \right) .
\ee 
The other two probabilities must be less than this value. Due to \eqref{nsitsim}, \eqref{abc}, \eqref{gamma},  the probabilities \eqref{p13pm}-\eqref{p13mp} simplify to
\bea \label{p13b}
P(\mp \pm |Q_1,Q_3)  
 &=& \frac14 \left(1 \pm a  - \gamma \right) \nonumber \\
 & \leqslant &  \frac14 \left(2 - b - b \cos[2 z_1] \cos[2 z_2] + b \cos[t] \sin[2 z_1] \sin[2 z_2] \right) \nonumber \\
  & \leqslant &   \frac14 \left( 1+ \alpha + \sqrt{1-2\alpha} \right) ,
\eea 
where the last line is found using \eqref{cosz1z2}.
%
The relations \eqref{p12b},\eqref{p23b},\eqref{p13b} altogether imply
\be 
P^* \leqslant  \frac14 \left( 1+ \alpha + \sqrt{1-2\alpha} \right) .
\ee 
Finally, in order to show that this upper bound is tight, that is, this upper bound is the exact value, it suffices to provide a quantum strategy that achieves this value. By performing a numerical optimization we came up with the quantum state, unitaries, and measurements defined by the parameter values,
\bea \label{qust}
& a=n_x=n_y=n_y =0, \nonumber \\
& \cos[t]=b=1, \nonumber \\
& \cos[2z_1] = \cos[2z_2] = (1-\sqrt{1-2\alpha})/2 . 
\eea 
which satisfies the three NSIT expressions \eqref{nd1},\eqref{nd2},\eqref{nd3} and gives LGI value \eqref{lgiob} $1+\alpha$. Using this we can calculate the probability, 
\be 
P(+ - |Q_1,Q_3) = \frac14 \left( 1+ \alpha + \sqrt{1-2\alpha} \right) . 
\ee 
which shows that the bound is tight.

\textit{Numerical Estimation : } Using the expressions for LGI and NSIT in terms of the parameters for the states unitaries and the generalized measurement, we numerically solve the optimization problem stated in the main text using the optimization tools of Mathematica which matches with the analytical bound derived above.

\subsection{Security against state Preparation}


Let us first find the denominators of the conditional probabilities of Equation 14 in the main text. From \eqref{p12pp}-\eqref{p12mm}, \eqref{p13pp}-\eqref{p13mm}, and \eqref{p23pp}-\eqref{p23mm}, we find the following expressions,
\bea 
& P(\pm|Q_1) = \frac12(1\pm n_z), \nonumber \\
& P(\pm|Q_2) = \frac12(1 \pm n_z \cos[2z_1] \pm \chi ).
\eea 
Substituting \eqref{nsitsim}, that is, whenever the three NSIT conditions are satisfied, the probabilities reduce to $P(\pm|Q_i) = 1/2$ for $i=1,2$. Consequently, each conditional probability is two times the respective joint probability. This implies that the desired quantity $\overline{P}^* = 2 P^*$, and we obtain the bound using the result of \textit{Theorem} 1.

\blk 


\section{Relaxing NSIT constraint} 
In our analysis, we have explored the ideal scenario where the No-Signaling-In-Time (NSIT) condition is fully satisfied along with LGI violation, leading to a completely random output and a violation of predictability. 
However, it is important to acknowledge that real-world experiments do not always satisfy the NSIT conditions and are satisfied up to a certain tolerance. In light of this, we have derived a bound that ensures a minimum level of assured randomness even in the cases for which NSIT is satisfied up to a certain tolerance, giving us a deeper understanding of the intricate interplay between the extent of NSIT satisfaction and the preservation of the minimum level of certified randomness.

We will solve the following optimization problem to numerically evaluate the minimum entropy bound when NSIT is not satisfied,

\begin{align*}
    &P^{*}(a_j|a_i,Q_i,Q_j) =  \text{max} \hspace{2 mm}  P(a_j|a_i,Q_i,Q_j) \\
                   &\textbf{subject to} \\
                   &\braket{Q_1Q_2} + \braket{Q_2Q_3} - \braket{Q_1Q_3} = 1 + \alpha \\
                   & P(+|Q_2) -P(++|Q_1, Q_2) - P(-+|Q_1, Q_2) = v\\
                   & P(+|Q_3) - P(++|Q_1, Q_3) - P(-+|Q_1,Q_3) = v\\
                   & P(+|Q_3) - P(++|Q_2, Q_3) - P(-+|Q_2,Q_3) = v\\
\end{align*}

 The result of the above optimization problem, as shown in Figure \ref{fig:randomnessvsnsitviolation}, indicates that as the violation of the No Signaling in Time (NSIT) conditions increases, the ability to generate high-quantity randomness decreases. Additionally, as the violation of NSIT becomes more pronounced, a higher threshold value of Leggett-Garg inequality (LGI) violation is needed to generate substantial randomness.  But even with a relatively high NSIT violation, a meaningful amount of random bits can still be obtained as the LGI violation approaches its maximum value. Here, we have shown how Genuine Randomness varies for some particular values of NSIT. However, we note that this trend is currently restricted to this assumption, and while there is some indication that there is some functional relationship, it calls for deeper studies that involve increasing the parameter space in the same sense as was done for studies involving probing the relationship between Bell inequality violations, genuine randomness and Non locality\cite{acin2012randomness}.

\begin{figure}[H]
\centering
\includegraphics[width=0.45\textwidth]{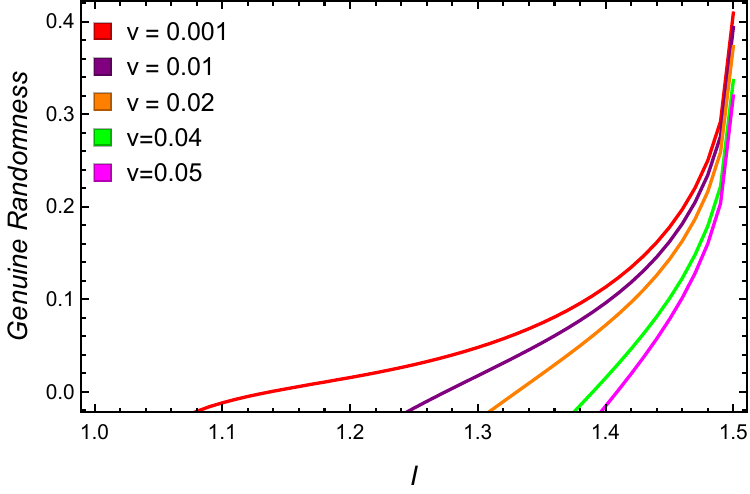}
\caption{Variation of genuine randomness as a function of NSIT violation v. As the violation of NSIT increases, the threshold value of the violation of LGI required to generate an appreciable amount of randomness also increases. But still, even up to the violation of NSIT being 0.05, the amount of random bits produced remains significant while approaching the maximum violation of LGI. }
\label{fig:randomnessvsnsitviolation}
\end{figure}

\section{Memory Effect for Conditional Probabilities}
\label{app: memoryeffectconditionalprobabilities}
To estimate the violation of the Leggett-Garg Inequality, it is necessary to generate data from the device multiple times. However, the device may exhibit variations in performance across different uses, one of the cases being the memory effect, where the output of a particular iteration might depend on the outcome of the previous outputs, hence making it necessary to use a statistical method to account for such memory effects\cite{barrett2002quantum}. We will demonstrate how to determine the randomness produced by the devices without making any assumptions about their internal behavior by combining the previously derived bound with a statistical approach.

Suppose we use the devices repeatedly $n$ times. Let $x_{i} ,y_{i} \in \{ 1,\ldots ,m\} $ be the inputs and $a_{i} ,b_{i} \in \{ 1,\ldots ,d\} $ be the outputs for each round $i$. We define $a^{k} =(a_{1} ,a_{2} ,\ldots ,a_{k} )$ as the first $k$ outputs $a_{i} $, similarly for $b^{k} $, $x^{k} $, and $y^{k} $. The input pairs ($x_{i},y_{i} $) at each round are random variables with the same distribution $P(x_{i}=x,y_{i}=y )=P(x,y)$, but $P(x,y)$ may not be a product distribution $P(x,y)=P(x)P(y)$.

Let $P_{R|S} =\left\{P(b^{n}  | a^{n} x^{n} y^{n} )\right\}$ be the conditional probability distribution of the final output string $r=(b^{n} )$ given the fact that the sequence of inputs $s=(x^{n},y^{n} )$ has been inserted in the devices and the string of initial output bits is $(a^{n})$

The min-entropy can now characterize the randomness of the output string conditioned on the inputs, 

$$H_\infty(R|S)=\min _{b^{n}}\left(-\log _{2} P( b^{n} | a^{n} x^{n} y^{n} )\right)$$

Now if we minimize $-\log _{2} P( b^{n} |a^{n} x^{n} y^{n} )$ wrt $(a^{n}, b^{n} $) and $(x^{n} ,y^{n} )$ then we can derive  a lower bound on $H_\infty(R|S)$, as $$H_\infty(R|S)\ge -\log _{2} P^{*}( b^{n} |a^{n} x^{n} y^{n} )$$

Now the conditional probability can be written down as,

\begin{eqnarray} \label{eq1}
-\log_2 P(a^nb^n|x^ny^n) &=&-\log_2\prod_{i=1}^n P(a_ib_i|x_iy_i) \nonumber\\
&=&-\log_2\prod_{i=1}^n \frac{P(b_i|a_ix_iy_i)}{P(a_i|x_i)}\nonumber\\
&=&- \log_2 \frac{P(b^n|a^nx^ny^n)}{P(a^n|x^n)}
\end{eqnarray}

If the events were independent, then the combined probability can be written down as a product of the individual runs,

\begin{equation}
\label{eq2}
    -\log_2 P(a^nb^n|x^ny^n) = -\log_2\prod_{i=1}^n P(a_ib_i|x_iy_i)
\end{equation}

Similarly, the combined probability for only the first measurement can be given by,

\begin{equation}\label{eq3}
    -\log_2 P(a^n|x^n) = -\log_2\prod_{i=1}^n P(a_i|x_i)    
\end{equation}

But we assume that the result of the $i^{th}$ trial depends on the results of all the $(i-1)^{th}$ runs so that the probability can be written down as the product of all the probabilities conditioned to the previous inputs and outputs. Moreover we assume that the output at round $i$ does not depend on future inputs $(x_j,y_j)$ with $j>i$
\begin{eqnarray} \label{eq4}
-\log_2 P(a^nb^n|x^ny^n)&=&-\log_2\prod_{i=1}^n P(a_ib_i|a^{i-1}b^{i-1}x^iy^i)\nonumber\\
&=&-\log_2\prod_{i=1}^n P(a_ib_i|x_iy_iW^i)\\
&=&\sum_{i=1}^n - \log_2 P(a_ib_i|x_iy_iW^i)\nonumber
\end{eqnarray}

The variable $W^{i} =(a^{i-1} b^{i-1} x^{i-1} y^{i-1} )$ is used to denote all events in the past of round $i$. 

Similarly, for the single measurement probabilities we have,

\begin{eqnarray} \label{eq5}
-\log_2 P(a^n|x^n)&=&-\log_2\prod_{i=1}^n P(a_i|a^{i-1}b^{i-1}y^{i-1}x^i)\nonumber\\
&=&-\log_2\prod_{i=1}^n P(a_i|x_iW^i)\\
&=&\sum_{i=1}^n - \log_2 P(a_i|x_iW^i)\nonumber
\end{eqnarray}

Now from equation \eqref{eq1}, \eqref{eq4} and \eqref{eq5} we have,

\begin{eqnarray}\label{eq6}
    -\log_{2}P(b^n|a^nx^ny^n)= -\sum_ {i=1}^n \log_2 P(b_i|a_ix_iy_iW^i)
\end{eqnarray}
The behavior of the devices at round $i$ conditioned on the past is characterized by a response function $P(a_{i} b_{i} |x_{i} y_{i} W^{i} )$ and an LGI violation $I(W^{i} )$. 

Now we have derived a bound on the probabilities for each trial,  

$$-\log _{2} P( b_{i} |a_{i}x_{i} y_{i} )\ge f(I)$$

Whatever the precise form of the quantum state and measurements implementing this behavior, they are bound to satisfy the constraint 
$$-\log _{2} P( b_{i} |a_{i}x_{i} y_{i} W^{i} )\ge f(I(W^{i} ))$$

Now we can insert this relation into Eq. \eqref{eq6},

\begin{equation}\label{eq7}
-\log_2 P(b^n|a^nx^ny^n) \geq \sum_{i=1}^n f\left(I(W^i)\right)]
\end{equation}

We have derived the bound on the probabilities for the Leggett Garg Inequality, and it takes the form, 

\begin{equation}\label{eq8}
    f(I) = -\log_2[\frac{1+\alpha + \sqrt{1-2\alpha}}{4}]
\end{equation}

where $\alpha \in (0,0.5]$. Now since this function is convex, we can write the above inequality as, 

\bea 
\label{eq9}
-\log_2 P(b^n|a^nx^ny^n) \geq&nf\left(\frac{1}{n}\sum_{i=1}^nI(W^i)\right)
\eea 

Now we will show a way of evaluating the quantity $\frac{1}{n}\sum_{i=1}^nI(W^i)$ in \eqref{eq9}, which can be estimated from the experimental data. This can be done in three steps:
\vspace{2 mm}

\textit{Step1 : Define an Estimator}

\vspace{2mm}

First, we will define a quantity that uses the output data $a$, $b$, and the measurement settings $x$ and $y$ to estimate the $LGI$ violation. Let us define a random variable,

\begin{equation}
    \label{random variable}
    \hat{I}_i = \sum_{abxy} c_{abxy} \frac{\chi(a_i = a , b_i = b, x_i = x, y_i = t)}{P(x,y)}
\end{equation}

 The random variable is defined in such a way that the expectation on the past $W^i$ is $E(\hat{I_i}|W^i) = I(W^i)$. The quantity $\chi(e)$ for an event e is $1$ if the event has occurred and is 0 if the event hasn't occurred.  The sum of the random variable for the $n$ iterations of the experiment, $\hat{I} = \sum_{i=1}^n {\hat{I}}_i$, estimates the $LGI$ violation for the experiment. We can show this by using the appropriate coefficients $c_{abxy}$, such that \eqref{random variable} corresponds to the LGI expression given in Equation 1 in the main text.

Let  $q = min_{xy}{P(x,y)}$ be the minimum probability of the measurement settings that we use, and we assume that $q>0$. 

\vspace{2 mm}
\textit{Step 2: Construct a sequence and prove it is a martingale}

Now in order to approximate the quantity $\frac{1}{n}\sum_{i=1}^nI(W^i)$ with the estimator that we defined above, we will have to construct a martingale out of these quantities and apply bounds on martingale increment. To do that, let us consider the sequence, 

\begin{equation}
    \label{sequence}
    Z^k = \sum_{i=1}^{k}(I_i - I(W_i))
\end{equation}

Now in order for the sequence $\{Z^k: k\geq1\}$ to be a martingale with respect to the sequence $\{W^k: k\geq2\}$ we will have to verify the following two properties of martingale,

\begin{enumerate}
    \item $E(|Z^k|)\leq \infty$
    \item $E(Z^k|W^1,W^2, \dots W^j) = E(Z^k|W^j) = Z^j$
\end{enumerate}

$I_i$ has a maximum value of $1/q$, and $I(W^i)$ is bounded by the maximum possible violation of the LGI inequality allowed by quantum mechanics, which we can denote by $I_q$. Since we assume $q\neq 0$ and $I_q$ is finite, therefore from the triangle inequality, the sequence $Z^k$ is bounded, implying that the expectation value is also bounded. 

From the definition of $W$, $W^k$ contains all the information of the $W^j$ where $j\leq k$, implying $E(Z^k|W^1,W^2, \dots W^j) = E(Z^k|W^j)$.

\begin{align*}
    E(Z^k|W^j) &= E(Z^j|W^j) + E(\sum_{i=j+1}^k(I_
    i - I(W^i))|W^j)\\
    &=Z^j\\
\end{align*}

Hence $\{Z^k: k\geq1\}$ to be a martingale with respect to the sequence $\{W^k: k\geq2\}$.

\vspace{2 mm}
\textit{Step3 : Bound on martingale}
\vspace{2 mm}

As a final step will use the Azuma-Hoeffding inequality\cite{tong2014possible,lalley2013concentration}, which is given by the following theorem.

\textbf{Theorem}: Let $S_n$ be a martingale relative to some sequence $Y_n$ satisfying $S_0 = 0$ and whose increments, 

$\zeta_n = s_n - s_n-1$ are bounded by $\mod{\zeta_k} \leq \sigma_k$ then, 

$$P(S_n\geq \alpha) \leq \exp\left(\frac{-\alpha^2}{\sum_{j=1}^n\sigma_j^{2}}\right)$$

Now, from the triangle inequality, 
\begin{align*}
Z^{k} - Z^{k-1} &= \mod{I_k - I(W^k)}\\
                & \leq \mod{I_k} + \mod{I(W^k)}\\
                & \leq \frac{1}{q} + I_q\\
\end{align*}

Hence taking $\alpha = n\epsilon$ the Azuma Hoeffding inequality implies, 

\begin{equation}
    P(Z^n \geq n \epsilon) \leq \exp\left(\frac{-n\epsilon^2}{2(1/q + I_q)^2}\right)
\end{equation}

Using this bound we can say that the quantity $\frac{1}{n} \sum _{i=1}^{n}I (W^{i} )$ can be lower than the observed value $\hat{I}=\frac{1}{n} \sum _{i=1}^{n}I_{i}  $ up to some $\epsilon$ only with some small probability $\delta $, 

 \begin{equation} \label{eq10}
\delta =\exp \left(-\frac{n{\epsilon}^{2} }{2(1/q+I_{q} )^{2} } \right).
\end{equation}

Combining this last result with Eq. \eqref{eq5}, we conclude that
\begin{equation} \label{eq11}
H_\infty(R|S)\ge -\log _{2} P(a^{n} b^{n} |x^{n} y^{n} )\ge nf\left(\hat{I}-\epsilon\right)
\end{equation}
with probability at least $1-\delta $.

\section{NSIT Conditions under memory effect }
\label{app : third}
We want to study how the memory effect is altering the NSIT conditions. The NSIT conditions that are being used in our protocol are given below, 

\begin{align}
\label{NSIT Equations}
    & P(+|Q_2) - P(++|Q_1, Q_2) - P(-+,Q_1,Q_2) = 0\nonumber\\
    & P(+|Q_3) - P(++|Q_1, Q_3) - P(-+,Q_1,Q_3) = 0\nonumber\\
    & P(+|Q_3) - P(++|Q_2, Q_3) - P(-+,Q_2,Q_3) = 0
\end{align}

We will use a similar treatment as used in Section II where we assume that the behavior of the devices at round $i$ conditioned on the past inputs and outputs is characterized by a response function $P(a_ib_i|x_iy_iW_i)$ and an NSIT violation of $N^j(W^i)$ where $j = 1,2,3$ and  $W^i = (a^{i-1}b^{i-1}x^{i-1}y^{i-1})$ denotes all the events in the past of round $i$. 

We will use a similar indicator function for the NSITs,

\begin{align}
\label{indicatorNSIT}
\hat{N_i^j} = &\sum_{m_{abxy}^j} m_{abxy}^j \chi(a_i = a, b_i = b, x_i = x, y_i = y)\\
\end{align}
where $\chi(e)$ is the indicator function of the event $e$ $i.e$ $\chi(e) = 1$ if the event has occurred and $\chi(e) = 0$ if the event does not occur. $a_i$ and $b_i$ denote measurement outcomes at round $i$, and $x_i$ ,$y_i \in  \{0, 1, 2, 3\}$ denote the measurement settings, where $0$ indicates no measurement. Now we define $\hat{N}^j = \frac{1}{n} \sum{\hat{N}_i^j}$ where the label $j$ indicates the particular NSIT condition. We can show that with the proper choice of coefficients $m$, $\hat{N}^j$ gives us the three NSIT conditions.

Now let us introduce the random variables, $Z^k_j$ for $j = 1, 2, 3$
\begin{equation}
\label{randomvariableNSIT}
Z^k_j= \sum_{i=1}^k {N_i^j - N^j(W^i)}
\end{equation}

With similar calculations as in Section II, we can show that each of these $Z_k's$ are martingales with respect to some sequence $W_k$. Now the range of martingale increment is bounded by, 

\begin{equation}
\label{martingaleincrement}
    |N^j_i - N^j(W^i)| \leq 1 + N_q^j
 \end{equation}
where $N_q^j$ is the maximum violation of the NSIT conditions allowed by quantum theory. 

So from the Azuma-Hoeffding Inequality, we can show that due to the memory effect, the NSIT will differ from the value obtained from the experiment by an amount $\epsilon$ with a probability $\delta$, 

\begin{equation}
    \label{AzumaHoeffdingNSIT}
    P\left(\frac{1}{n}\sum_{i=1}^nN_i^j - \frac{1}{n} \sum_{i=1}^n{\hat{N}^j(W^i)} \leq \epsilon_j\right)\leq \delta_j
\end{equation}

where 
\begin{equation}
    \delta_j = \exp\left(-\frac{n \epsilon_j^2}{2(1+N_q^j)}\right)
\end{equation}

In the context of the three NSITs, where the quantum bound for each NSIT is $N_q^j = 1/2$, we can demonstrate the impact of the memory effect on the estimated NSIT values. By considering a high confidence interval of 
$1-\delta =.99$, we observe that the deviation between the experimentally measured values and the NSIT values, accounting for the memory effect, approaches zero as the number of runs, $n$, increases. Specifically, when the number of runs is approximate $n \approx O(10^5)$, the deviation between the estimated NSIT values and the experimental values is on the order of $\epsilon \approx O(10^{-2})$. This finding indicates that, for large values of $n$, the presence of a memory effect does not significantly impact adherence to the NSIT conditions.

\begin{figure}[h]
    \centering
    \includegraphics{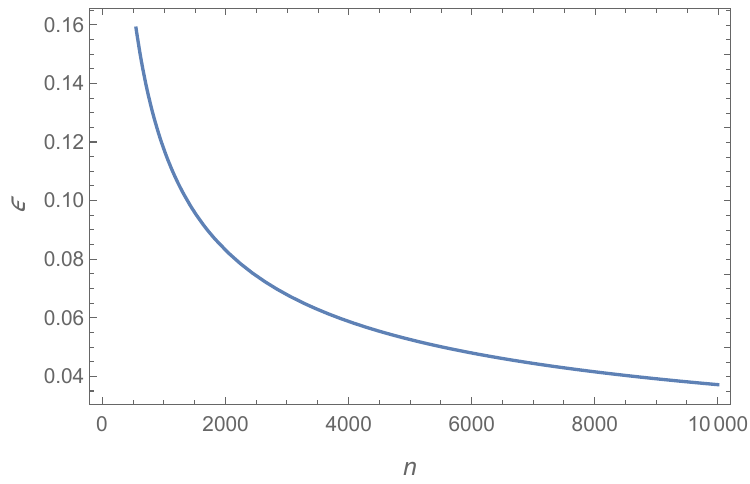}
    \caption{As the number of runs increases, the deviation in the value of the No Signaling in time relations due to the memory effect from the experimental value decreases. When the number of runs is of the order $n \approx O(10^5)$ the deviation is of the order $\epsilon \approx O(10^{-2})$ for a confidence interval of $1 - \delta = 0.99$}
    \label{fig:enter-label}
\end{figure}

\end{document}